\newtheorem{Lem}{Lemma}
\newtheorem{theorem}{Theorem}
\newtheorem{Cor}{Corollary}
\newtheorem{Obs}{Observation}
\newcommand{\lca}{\operatorname{lca}}
\newcommand{\rep}{\operatorname{rep}}
\newcommand{\ang}[1]{\left< #1 \right>}
\newcommand{\rt}{\operatorname{root}}
\newcommand{\Patrascu}{P\v{a}tra\c{s}cu}
\newcommand{\poly}{\operatorname{poly}}
\newcommand{\pre}{\operatorname{pre}}
\newcommand{\post}{\operatorname{post}}
\newcommand{\sort}{\operatorname{sort}}
\newcommand{\pred}{\operatorname{pred}}
\title{Connectivity Oracles for Planar Graphs} \author{Glencora
  Borradaile\footnote{School of Electrical Engineering and Computer
    Science, Oregon State University, {\tt glencora@eecs.orst.edu}.
    Supported by NSF CCF-0963921.} \and Seth Pettie\footnote{Supported
    by NSF CAREER grant no. CCF-0746673 and a grant from the US-Israel
    Binational Science Foundation.} \and Christian Wulff-Nilsen
  \footnote{Department of Mathematics and Computer Science, University of Southern Denmark, \texttt{koolooz@diku.dk}, \texttt{http://imada.sdu.dk/$_{\widetilde{~}}$cwn/}.
    This research was partially supported by NSERC and MRI.}}
\begin{document}

\maketitle
\begin{abstract}
We consider dynamic subgraph connectivity problems for planar undirected graphs.
In this model there is a fixed underlying planar graph, where each
edge and vertex is either ``off'' (failed) or ``on'' (recovered).  We
wish to answer connectivity queries with respect to the ``on''
subgraph.
The model has two natural variants, one in which there are $d$
edge/vertex failures that precede all connectivity queries, and one in
which failures/recoveries and queries are intermixed.

We present a $d$-failure connectivity oracle for planar graphs that
processes any $d$ edge/vertex failures in $\sort(d,n)$ time so that
connectivity queries can be answered in $\pred(d,n)$ time.  (Here
$\sort$ and $\pred$ are the time for integer sorting and integer
predecessor search over a subset of $[n]$ of size $d$.)  Our algorithm
has two discrete parts.  The first is an algorithm tailored to
triconnected planar graphs.  It makes use of Barnette's theorem, which
states that every triconnected planar graph contains a degree-3
spanning tree.
The second part is a generic reduction from general (planar) graphs to
triconnected (planar) graphs.  
Our algorithm is, moreover, provably optimal.  An implication
of \Patrascu{} and Thorup's lower bound on predecessor search is 
that no $d$-failure connectivity oracle (even on trees) can beat $\pred(d,n)$ query time.

We extend our algorithms to the
subgraph connectivity model where edge/vertex failures (but no
recoveries) are intermixed with connectivity queries.  In triconnected
planar graphs each failure and query is handled in $O(\log n)$ time
(amortized), whereas in general planar graphs both bounds become
$O(\log^2 n)$.

\end{abstract}
\setcounter{page}{0}
\thispagestyle{empty}

\newpage

\section{Introduction}

Algorithms for dynamic graphs have traditionally assumed that the graph evolves according to a completely arbitrary sequence of insertions and deletions of graph elements.  This model makes minimal assumptions but often sacrifices efficiency for generality.  For example, real world networks (router networks, road networks, etc.) {\em do} change slowly over time.  However, the real dynamism of the networks comes from the frequent {\em failure} of edges/nodes and their subsequent recovery.  In this paper we study connectivity problems in the {\em dynamic subgraph model}, which attempts to accurately model this type of dynamism.  It is assumed that there is some fixed underlying graph whose nodes and edges can be {\em off} (failed) or {\em on}; queries (connectivity queries, in our case) then answer questions about the subgraph turned on.
The power of this model (compared to the fully dynamic graph model) stems from the ability to preprocess the underlying graph in advance.

There are two natural variants of the dynamic subgraph model.  In the $d$-failure version failures and recoveries occur in lockstep: a set $F$ of $d=|F|$ edges/nodes fail together.
Our goal is to process $F$, ideally in $\tilde{O}(d)$ time, in order to answer connectivity queries in $G\backslash F$.  Here $d$ may or may not be a parameter of the algorithm.  In the fully dynamic model, node/edge failures and recoveries are presented one at a time and intermixed with connectivity queries, whereas in the decremental model the updates are restricted to failures.

\paragraph{Results.} We give new algorithms for subgraph connectivity (aka connectivity oracles) on undirected planar graphs in the $d$-failure model and the decremental model, all of which require linear preprocessing time.  When failures are restricted to edges, we give an especially simple connectivity oracle that processes $d$ edge failures (for any $d$) in $O(\sort(d,n))$ time and subsequently answers queries in $O(\pred(d,n))$ time.
Here $\sort$ and $\pred$ refer to the time for sorting $d$ integers in the universe $\{1,\ldots,n\}$ and $\pred$ for the time for predecessor search, given $\sort(d,n)$ preprocessing time.  (It is known that $\sort(d,n)=O(d \log\log d)$ deterministically, $O(d \sqrt{\log\log d})$ randomized~\cite{Han04,HT02}, and $O(d)$ randomized if $d < 2^{\sqrt{\log n}-\epsilon}$~\cite{AnderssonHNR98}.   For predecessor search the bound is $\pred(d,n) = O(\min\{\frac{\log d}{\log\log n},\frac{\log\log d \cdot \log\log n}{\log\log\log n}\})$ deterministically~\cite{FW94,AnderssonT07} and $O(\log\log(n/d))$ randomized~\cite{vEKZ77,Willard83}.)

The problem becomes more complicated when vertices fail since we cannot, in general, spend time proportional to their degrees.  Our second algorithm is a $d$-failure connectivity oracle for edge and vertex failures with the same parameters (linear preprocessing, $O(\sort(d,n))$ to process $d$ failures, $O(\pred(d,n))$ time per query).  It consists of two parts: a solution for triconnected graphs and a generic reduction from $d$-failure oracles in general graphs to $d$-failure oracles in triconnected graphs.  Triconnectivity plays an important role in the algorithm as it allows us to apply Barnette's theorem~\cite{Barnette66}, which states that every triconnected planar graph contains a degree-3 spanning tree.
It is known~\cite{PatrascuT07} that predecessor search is reducible to the $d$-edge/node failure connectivity problem on trees (and hence planar graphs).  
Our query time is therefore provably optimal.  In particular, \Patrascu{} and Thorup's lower bound~\cite{PatrascuT06} implies that $O(\log\log(n/d))$ query time cannot be beaten in general,
even given $O(d\poly(\log n))$ time to preprocess the edge/node failures.

Our third algorithm is in the decremental model.  In triconnected planar graphs we can support vertex failures in $O(\log n)$ amortized time and connectivity queries in $O(\log n)$ time, whereas in general planar graphs both bounds become $O(\log^2 n)$.\footnote{These bounds are amortized over the actual number $f$ of failures, i.e., in triconnected graphs processing $f$ failures takes $O(f\log n)$ time total.}  The logarithmic slowdown comes from a new reduction from (planar) dynamic subgraph connectivity to the same problem in triconnected graphs.

\paragraph{Prior Work on Dynamic Connectivity.} Before surveying
dynamic subgraph connectivity it is instructive to see what type of
``off the shelf'' solutions are available using traditional dynamic
graph algorithms.  The best known dynamic connectivity algorithms for
general undirected graphs require $O(\sqrt{n})$ worst case time per
edge update~\cite{Frederickson85,EppsteinGIN97} or $O(\log^2 n)$ time
amortized~\cite{HolmLT01}.  Vertex updates are simulated with $O(n)$
edge updates.  In dynamic planar graphs the best connectivity
algorithms take $O(\log n)$ time per edge update~\cite{EppsteinITTWY92}.

\paragraph{Prior Work on Subgraph Connectivity.}  The dynamic subgraph model was proposed explicitly by Frigioni and Italiano~\cite{FrigioniI00}, who proved that in planar graphs, node failures and recoveries could be supported in $O(\log^3 n)$ amortized time per operation and connectivity queries in $O(\log^3 n)$ worst case time.  An earlier algorithm of Eppstein et al.~\cite{EppsteinITTWY92} implies that edge failures, edge recoveries, and connectivity queries in planar graphs require $O(\log n)$ time.  In general graphs, Chan, \Patrascu, and Roditty~\cite{ChanPR11} (improving~\cite{Chan06b}) showed that node updates could be supported in amortized time $O(m^{2/3})$, where $m$ is the number of edges,
and connectivity queries in $O(m^{1/3})$.  Chan et al.~\cite{AfshaniC09,Chan06b,ChanPR11} gave numerous applications of subgraph connectivity to geometric connectivity problems.  The first algorithm with worst-case guarantees was given by Duan~\cite{Duan10}, who showed that node updates and queries require only $O(m^{4/5})$ and $O(m^{1/5})$ time, respectively.  

In the $d$-{\em edge} failure model \Patrascu{} and Thorup~\cite{PatrascuT07} gave a connectivity oracle for general graphs that processes $d$ failures in $O(d\log^2 n\log\log n)$ time and answers queries in $O(\log\log n)$ time.  However, their structure requires {\em exponential} preprocessing time; a variant constructible in polynomial time has a slower update time: $O(d\log^{5/2} \log\log n)$.  Duan and Pettie~\cite{DuanP10} gave a connectivity oracle in the $d$-{\em node} failure model occupying $O(mn^{\epsilon})$ space that processes failures in $O(\poly(d,\log n))$ time and answers queries in $O(d)$ time, where $\poly(d,n)$ depends on $\epsilon$.  They also showed that a $d$-edge failure oracle could be constructed in $\tilde{O}(n)$ time with $O(d^2\log\log n)$ update time and $O(\log\log n)$ query time.  

The {\em distance sensitivity} oracles avoiding 1 node failure~\cite{DTCR08,BernsteinK09} or 2 node failures \cite{DP09a} also, as a special case, answer 1- and 2-failure connectivity queries on {\em directed} graphs in $O(1)$ and $O(\log n)$ time, respectively.
These data structures occupy $\tilde{O}(n^2)$ space.

\paragraph{Overview}
Section~\ref{sec:DefsNotRes} reviews notation and terminology.  In Section~\ref{sec:MultiEdge} we describe our planar $d$-edge failure connectivity oracle.  In Section~\ref{sec:MultiVertexEdge} we give a $d$-vertex failure oracle for triconnected planar graphs
and in Section~\ref{sec:SingleVertexEdge} we extend it to a decremental subgraph connectivity oracle for triconnected graphs.
Finally, in Section~\ref{sect:reduction-to-tri} we give reductions from general graphs to triconnected graphs, which do not assume planarity.

\section{Definitions, Notation, and Basic Results}\label{sec:DefsNotRes}
We assume that all graphs considered are undirected.
A \emph{planar graph} is a graph that can be drawn in the plane such that no two edges cross. We refer to such a drawing
as a \emph{plane graph}. A plane graph $G = (V,E)$ partitions the plane into maximal open connected sets and we refer to the closure
of these sets as the \emph{faces} of $G$. If $G$ is connected, we define a plane graph, called the \emph{dual graph} $G^*$ of $G$,
as follows. Associated with each face $f$ of $G$ is a vertex in $G^*$ which we identify with $f$ and which we draw inside $f$.
For each edge $e$ in $G$, there is an edge $(f_1,f_2)$ in $G^*$, where $f_1$ and $f_2$ are the faces in $G$ incident to $e$.
We draw $(f_1,f_2)$ such that it crosses $e$ exactly once and crosses
no other edge in $G$ or in $G^*$. We identify $(f_1,f_2)$ with $e$. It can be shown that $G^*$ is also connected and
that $(G^*)^* = G$. In particular, each face in $G^*$ corresponds to a vertex in $G$. We refer to $G$ as
the \emph{primal} graph.

For vertices $u$ and $v$ in a rooted tree $T$, we let $\lca_T(u,v)$ denote the lowest common ancestor of $u$ and $v$ in $T$.
Consider a spanning tree $T$ in primal graph $G$. It is well-known that the edges not present in $T$ form a spanning tree $T^*$ of
$G^*$. We call $T^*$ the \emph{dual tree} of $T$. The following lemma is a well-known result~\cite{Whitney1933}.
\begin{Lem}\label{Lem:PrimalDualLink}
  For an edge $e$ in primal tree $T$, let $f$ and $g$ be the faces to
  either side of $e$ in $G$. Then the edges of $E\setminus\{e\}$ that
  connect the two subtrees of $T\setminus\{e\}$ are exactly those on
  the simple path in $T^*$ from $f$ to $g$.
\end{Lem}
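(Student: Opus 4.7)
The plan is to invoke the classical cut--cycle duality of planar graphs, and identify the set in question as a fundamental bond whose dual is a fundamental cycle in $T^*$.

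First, I would set up the objects. Removing $e$ from $T$ partitions the vertex set into $V_1$ and $V_2$, the vertex sets of the two subtrees of $T\setminus\{e\}$. The edges of $E\setminus\{e\}$ that connect these subtrees are, together with $e$, exactly the edges of the cut $\delta(V_1)$. Because $T$ restricted to each side is a spanning tree of that side, both induced subgraphs are connected, so $\delta(V_1)$ is actually a \emph{bond} (a minimal edge cut) of $G$---the fundamental bond of $e$ with respect to $T$.

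Second, I would consider the dual side. Since $e\in T$, the dual edge $e^*=(f,g)$ is not in $T^*$, and I would define $P^*$ to be the unique simple path from $f$ to $g$ in the spanning tree $T^*$. The goal then becomes: the duals of the non-$e$ edges of $\delta(V_1)$ are exactly the edges of $P^*$.

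Third, I would appeal to the standard planar cut--cycle duality: a set $F\subseteq E$ is a bond of $G$ if and only if $F^*=\{e^*:e\in F\}$ is a simple cycle of $G^*$. (This follows from the Jordan curve theorem: a simple cycle in the plane separates vertices of $G$ into inside/outside, and the edges of $G$ with one endpoint on each side are exactly those crossed by the cycle.) Applying this to the bond $\delta(V_1)$, the dual set $C^*:=\delta(V_1)^*$ is a simple cycle in $G^*$ containing $e^*=(f,g)$. For every other edge $h\in\delta(V_1)$, the edge $h$ is not in $T$, so $h^*$ lies in $T^*$. Therefore $C^*\setminus\{e^*\}$ is a set of edges of $T^*$ that, together with $e^*$, forms a simple cycle; equivalently, it is a simple path in $T^*$ from $f$ to $g$. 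By the uniqueness of such paths in a tree, $C^*\setminus\{e^*\}=P^*$, which is exactly the claim.

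The only real obstacle is the cut--cycle duality step, which is not immediate from the definitions but is a classical consequence of the Jordan curve theorem (and is essentially Whitney's theorem, already cited in the text); I would either quote it or include a one-line Jordan curve argument. Everything else is bookkeeping about fundamental cycles and fundamental bonds with respect to dual spanning trees.
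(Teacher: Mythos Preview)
Your argument is correct. Note, however, that the paper does not actually give its own proof of this lemma: it is stated as ``a well-known result'' with a citation to Whitney~\cite{Whitney1933} and left unproved. Your approach via the fundamental bond/cycle correspondence and planar cut--cycle duality is precisely the classical Whitney argument the paper is invoking, so there is nothing to compare.
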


A co-path is a sequence of faces that is a sequence of vertices in the
dual that form a path.  A co-path {\em avoids} a set of edges $F$ if
every consecutive pair of faces in the sequence shares an edge not in $F$.

\section{Edge Failures}\label{sec:MultiEdge}

In this section, we develop a data structure that, given a planar
undirected graph $G = (V,E)$, an integer $d\geq 1$, and a dynamic
subset of at most $d$ \emph{failed edges}, supports the following
operations:
\begin{enumerate}
\item \texttt{update}$(F)$: set $F$ to be the set of failed edges;
\item \texttt{connected?}$(u,v)$: are vertices $u$ and $v$
  connected in $G\setminus F$?
\end{enumerate}
We may assume that $G$ is plane and connected.  We will examine this
problem in the dual by way of the following lemmas:
\begin{Lem}\label{Lem:FacePath}
  Two vertices $u$ and $v$ are connected in $G\setminus F$ iff there
  is a $u$-to-$v$ co-path in $G^*$ avoiding $F$.
\end{Lem}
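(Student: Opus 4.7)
The plan is to prove the equivalence via planar duality. Denote by $F^{*} \subseteq E(G^{*})$ the dual edges of $F$; the condition that consecutive faces in a co-path share an edge not in $F$ is equivalent to saying the dual walk uses only edges of $G^{*} \setminus F^{*}$. For the $(\Leftarrow)$ direction, I would take a $u$-to-$v$ co-path $f_{0}, f_{1}, \ldots, f_{k}$ avoiding $F$ and, for each consecutive pair $(f_{i}, f_{i+1})$, select a primal edge $e_{i} \in E \setminus F$ that they share. Concatenating these edges (with short walks along the boundaries of the $f_{i}$ to reach $u$ from $f_{0}$ and $v$ from $f_{k}$) produces a walk from $u$ to $v$ in $G \setminus F$, witnessing connectivity.

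For the $(\Rightarrow)$ direction, I would take a simple primal $u$-$v$ path $P = u_{0} u_{1} \cdots u_{\ell}$ in $G \setminus F$, view it as a Jordan arc in the planar embedding, and trace the sequence of faces encountered on, say, the left side of $P$. Consecutive faces in this sequence share an edge of $P$, which is in $E \setminus F$, so the dual sequence is a candidate co-path avoiding $F$. At each internal vertex $u_{i}$ where the face on the left changes, I need to interpolate a sub-walk in $G^{*}$ through the face corners around $u_{i}$, and argue that the side-choice at each vertex (left vs.\ right of $P$) can be made so that all dual edges used lie outside $F^{*}$.

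The main obstacle is precisely this interpolation at internal vertices of $P$: a vertex $u_{i}$ may have failed edges on both sides of the turn, so neither a purely-left nor a purely-right routing works locally. The cleanest resolution I see is to sidestep the direct construction and instead argue the contrapositive, invoking the classical planar duality that a minimal edge cut separating $u$ from $v$ in $G$ corresponds to a simple cycle in $G^{*}$ separating the faces $\hat u$ and $\hat v$ of $G^{*}$ that contain the points $u$ and $v$. If no $u$-to-$v$ co-path in $G^{*}$ avoiding $F$ exists, then $F^{*}$ must contain such a separating cycle in $G^{*}$, whose primal image is a $u$-$v$ cut entirely contained in $F$ --- contradicting the existence of $P$ in $G \setminus F$. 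Combining the direct $(\Leftarrow)$ argument with this contrapositive treatment of $(\Rightarrow)$ yields the equivalence.
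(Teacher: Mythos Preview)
You have misread which graph the co-path lives in. The lemma speaks of a co-path \emph{in $G^*$}, i.e., a sequence of faces of $G^*$. Since $(G^*)^* = G$, faces of $G^*$ are precisely vertices of $G$, so a $u$-to-$v$ co-path in $G^*$ is literally a sequence $u = w_0, w_1, \ldots, w_k = v$ of $G$-vertices. The condition ``consecutive faces share an edge not in $F$'' then says that for each $i$ there is a primal edge $(w_i,w_{i+1}) \in E(G)\setminus F$ (edges of $G$ and $G^*$ are identified). Hence a $u$-to-$v$ co-path in $G^*$ avoiding $F$ is exactly a walk from $u$ to $v$ in $G\setminus F$, and the equivalence is immediate. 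This is precisely what the paper's proof does: it simply observes that a primal path, read as a vertex sequence, \emph{is} the required co-path in $G^*$, and conversely.

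Your argument instead treats the $f_i$ as faces of $G$ with $u$ lying on the boundary of $f_0$, which is why you end up with Jordan-curve tracing, side choices at internal vertices, and the cut/cycle duality for the contrapositive. None of that machinery is needed here (it is closer in spirit to Lemma~\ref{Lem:SuperFaces}, where one genuinely works with faces of $G^*$'s subgraph $G_F^*$). As a side remark, under your misreading the $(\Leftarrow)$ direction already has a gap: to pass from the chosen edge $e_{i-1}$ to $e_i$ you propose walking along the boundary of the face $f_i$, but that boundary may contain edges of $F$, so the concatenation need not lie in $G\setminus F$.
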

\begin{proof}
  If $u$ and $v$ are connected in $G\setminus F$, then there is a
  $u$-to-$v$ path $P$ in $G$ not using any edges in $F$.  This path,
  viewed as a sequence of vertices in $G$ is a sequence of faces, or
  co-path, in $G^*$.  Since consecutive vertices in $P$ are adjacent by
  way of edges not in $F$, consecutive faces in the identified co-path
  share an edge not in $F$: this $u$-to-$v$ co-path avoids $F$.

  Conversely, let $P$ be a $u$-to-$v$ co-path avoiding $F$.  $P$ is a
  sequence of faces in $G^*$, and so is a sequence of vertices in $G$.
  Consecutive faces on $P$ in $G^*$ share an edge avoiding $F$ and so
  are adjacent by way of edges not in $F$: $P$ is a path in $G$ not
  using any edges in $F$.
\end{proof}

Consider the subgraph $G_F^*$ of $G^*$ consisting of the failed edges,
$F$. Let $G_F^*$ inherit the embedding of $G^*$.  We refer to the
faces of $G_F^*$ as \emph{superfaces}.  Each superface corresponds to
the union of faces of $G^*$.
\begin{Lem}\label{Lem:SuperFaces}
  Let $f$ and $g$ be faces of $G^*$.  There is an $f$-to-$g$ co-path
  in $G^*$ avoiding $F$ iff $f$ and $g$ are contained in the same
  superface of $G_F^*$.
\end{Lem}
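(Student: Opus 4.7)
The plan is to view $G_F^*$ as being obtained from the full dual $G^*$ by deleting the (duals of the) non-failed edges $G^*\setminus F$, and to track how the faces of $G^*$ coalesce as these deletions proceed. Each deleted dual edge $e\in G^*\setminus F$ borders exactly two faces $f_1,f_2$ of $G^*$ (possibly $f_1=f_2$ if $e$ is a bridge), so removing it joins the interiors of $f_1$ and $f_2$ into a single planar region. Superfaces of $G_F^*$ should therefore be exactly the equivalence classes of $G^*$-faces under the transitive closure of the relation ``shares an edge not in $F$,'' and this relation is exactly single-step co-path reachability in $G^*$ avoiding $F$.

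For the $(\Leftarrow)$ direction, I would take a co-path $f=f_0,f_1,\ldots,f_k=g$ avoiding $F$. Each consecutive pair $f_i,f_{i+1}$ shares an edge $e_i\notin F$; since $e_i$ is not drawn in $G_F^*$, the interiors of $f_i$, $f_{i+1}$, together with the interior of $e_i$, form a path-connected subset of the plane disjoint from $G_F^*$. Hence $f_i$ and $f_{i+1}$ lie in the same superface, and chaining along the co-path puts $f$ and $g$ in a common superface.

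For the $(\Rightarrow)$ direction I would peel the edges of $G^*\setminus F$ off $G^*$ one at a time, maintaining the invariant that two $G^*$-faces lie in the same face of the current intermediate graph iff they are connected by a co-path in $G^*$ using only already-peeled edges (all of which lie outside $F$). The base case $G^*$ itself is trivial, since each face is its own region and only trivial co-paths exist. Each peel of an edge $e$ with incident $G^*$-faces $f_1,f_2$ merges exactly the current region containing $f_1$ with the one containing $f_2$; combining the previous invariant with the new one-edge co-path step through $e$ preserves the invariant. When all of $G^*\setminus F$ has been removed we arrive at $G_F^*$, and the invariant becomes precisely the stated equivalence.

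The main obstacle is purely topological bookkeeping: justifying that removing a single edge from a plane graph merges exactly the two faces incident to it and leaves all other faces unchanged. This is a standard consequence of the rotation-system description of a plane graph -- deleting a non-bridge edge decreases the face count by one, and deleting a bridge leaves the face structure unchanged -- and both cases fit the merging description above, so the induction goes through without complication.
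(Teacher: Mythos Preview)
Your argument is correct, though note that your direction labels are swapped: what you call $(\Leftarrow)$ actually proves the forward implication (existence of an $F$-avoiding co-path $\Rightarrow$ same superface), and your $(\Rightarrow)$ proves the converse.

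The route differs from the paper's. For ``co-path $\Rightarrow$ same superface,'' the paper argues by contrapositive, using the boundary of a superface as a separating cycle of $F$-edges that any co-path must cross; you instead give a direct chaining argument, noting that each shared non-$F$ edge is absent from $G_F^*$ so the adjacent faces lie in one planar region. Your version is arguably more elementary here, sidestepping any Jordan-curve-style separation claim. For ``same superface $\Rightarrow$ co-path,'' the paper exploits path-connectedness of a superface directly: draw a curve from $f$ to $g$ inside the superface and read off the sequence of $G^*$-faces it passes through, observing that every edge it crosses lies outside $F$. Your edge-peeling induction reaches the same conclusion combinatorially, at the cost of invoking the single-edge-deletion face-merge fact repeatedly rather than one global connectedness statement. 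Both approaches are sound; the paper's is shorter and more geometric, yours is more incremental and avoids reasoning about curves in the plane.
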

\begin{proof}
  Suppose $f$ and $g$ are contained in different superfaces; let $C$
  be the set of edges that bound the superface containing $f$.  Any
  $f$-to-$g$ co-path requires a face on either side of $C$ and so
  cannot avoid $C$, which is a subset of $F$.

  Suppose otherwise; let $S$ be the set of faces of $G^*$ that form
  the superface $f_S$ containing $f$ and $g$.  Since $f_S$ is a face of
  $G_F^*$, there is a curve $\cal C$ contained entirely in $f_S$ that starts
  inside $f$ and ends inside $g$.  Consider the sequence of faces in
  $S$ that this path visits;  this is an $F$-avoiding $f$-to-$g$ co-path as
  consecutive faces must share an edge that $\cal C$ crosses and this
  edge cannot be in $F$.
\end{proof}

We restate the operations in light of the corollary:
\begin{Cor} \label{cor:connected}
  Two vertices $u$ and $v$ are connected in $G\setminus F$ iff $u$ and
  $v$ are contained in the same superface of $G_F^*$.
\end{Cor}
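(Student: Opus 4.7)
The plan is to prove Corollary~\ref{cor:connected} by simply chaining Lemma~\ref{Lem:FacePath} and Lemma~\ref{Lem:SuperFaces}, using the identification of vertices of $G$ with faces of $G^*$ that comes from the duality $(G^*)^* = G$ established in Section~\ref{sec:DefsNotRes}.

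First I would make explicit that, under the identification $(G^*)^*=G$, the vertex $u$ of $G$ is the same object as a face of $G^*$, and likewise for $v$. With this identification in place, Lemma~\ref{Lem:FacePath} tells us that $u$ and $v$ are connected in $G\setminus F$ if and only if there exists a $u$-to-$v$ co-path in $G^*$ that avoids $F$. Next I would apply Lemma~\ref{Lem:SuperFaces} to the faces $u$ and $v$ of $G^*$: such an $F$-avoiding co-path exists if and only if $u$ and $v$ lie in the same superface of $G_F^*$. Composing these two biconditionals yields exactly the statement of the corollary.

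The only step requiring any care is the identification issue: Lemma~\ref{Lem:FacePath} is phrased in terms of primal vertices while Lemma~\ref{Lem:SuperFaces} is phrased in terms of dual faces, so I would explicitly remark that these refer to the same objects under $(G^*)^* = G$, and that the co-path notion used in both statements is the same. I do not expect any real obstacle, since the corollary is essentially a translation of the composition of the two lemmas into the language of superfaces.
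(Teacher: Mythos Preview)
Your proposal is correct and matches the paper's intent: the corollary is stated without proof, as an immediate consequence of Lemmas~\ref{Lem:FacePath} and~\ref{Lem:SuperFaces} under the identification of primal vertices with dual faces. Your explicit attention to that identification is the only point worth spelling out, and you handle it appropriately.
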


\begin{enumerate}
\item \texttt{update}$(F)$: set $F$ to be the set of failed edges;
\item \texttt{connected?}$(u,v)$: are $u$ and $v$ contained in the
  same superface of $G_F^*$?
\end{enumerate}

\subsection{The data structure}
Fix a rooted spanning tree $T$ of the primal graph $G$.  We use $T$ to
determine the superfaces of $G_F^*$ containing the faces corresponding to the query vertices $u$ and $v$.
To do so, we require constant-time lca query support for $T$~\cite{HT84,AlstrupGKR02,BF-C00}.

\paragraph{\texttt{update}$(F)$:} The edges $F$ are listed in
no particular order.  We start by building the planar embedding of the
subgraph $G_F^*$ induced by $F$ that is inherited from $G^*$.  Let
$V(F)$ be the endpoints of $F$ (in the dual sense).  For each vertex
$v \in V(F)$ identify the edges of $F$ incident to $v$ and their
cyclic ordering\footnote{The cyclic ordering of edges incident to each
  vertex is sufficient to define the embedding.~\cite{Edmonds60}} in $G^*$.
We can compute these orderings in $\sort(d,n)$ time.

The boundaries of the superfaces given by $G_F^*$ can be traversed in
$O(d)$ time given this combinatorial embedding.  The set $V(F)$ is the
set of faces of $G^*$ that are along the boundaries of superfaces of
$G_F^*$.  Label a dual face/primal vertex in $V(F)$ with the
superface that contains it; mark the vertices of the static tree $T$
with these superface labels.

\paragraph{\texttt{connected?}$(u,v)$:}
To answer a query \texttt{connected?}$(u,v)$, suppose we have
identified the first and last marked vertices (if any) on the
$u$-to-$v$ path in $T$; call them $\hat{u}$ and $\hat{v}$.\footnote{This is a slightly more general problem than the {\em marked ancestor} problem considered by Alstrup, Husfeldt, and Rauhe~\cite{AlstrupHR98}.}
By Lemmas~\ref{Lem:FacePath}
and~\ref{Lem:SuperFaces}, $u$ and $v$ are connected in $G\setminus F$
iff $\hat{u}$ and $\hat{v}$ do not exist, or $\hat{u}$ and $\hat{v}$ are labelled with the same superface.  
Therefore, we need only identify $\hat{u}$ and $\hat{v}$, if they exist.  
Lemma~\ref{Lem:marked-ancestor} shows that finding $\hat{u}$ and $\hat{v}$ is reducible to 
one least common ancestor query and $O(1)$ predecessor queries.

\begin{Lem}\label{Lem:marked-ancestor}
Let $T$ be a tree of size $n$ and $M\subseteq V(T)$ be a set of {\em marked} vertices, with $|M|=d$.
Then after $O(n)$ preprocessing (independent of $M$),
an $O(d)$-size data structure can be constructed in $O(\sort(d,n))$ time that answers the
following query in $O(\pred(d,n))$ time:
Given $u,v\in V(T)$, what are the first and last $M$-vertices on the $u$-$v$ path?
\end{Lem}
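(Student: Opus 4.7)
The plan is to build, on top of the $O(n)$ preprocessing of $T$ (Euler tour, first/last occurrences, depths, preorder, subtree sizes, and constant-time LCA), two auxiliary structures on $M$ that together answer the query with one LCA call and $O(1)$ predecessor calls. The first structure is an \emph{innermost-$M$-interval} sweep: I sort $M$ by Euler-tour first occurrence in $O(\sort(d,n))$ time and sweep the $2d$ interval endpoints to partition $[1,2n]$ into $\le 2d+1$ atomic pieces, recording on each piece the $M$-vertex whose Euler interval is innermost there; a predecessor query on the piece boundaries then returns, in $O(\pred(d,n))$ time, the deepest $M$-ancestor $\alpha_x$ of any $x\in V(T)$. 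During the same sweep I also read off, for each $m\in M$, its nearest strict $M$-ancestor $\pi_M(m)$ (the innermost $M$-interval enclosing $m$'s Euler interval, or a virtual $\bot$ if none).

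The second structure groups $M$ by $\pi_M$ into lists $C(m)=\{c\in M:\pi_M(c)=m\}$, $m\in M\cup\{\bot\}$. Two elements of any $C(m)$ cannot be comparable in $T$ (that would contradict the choice of $\pi_M$), so their preorder intervals are pairwise disjoint; after sorting each list by preorder in $O(\sort(d,n))$ time overall and building a per-list predecessor, the question ``which $M$-child of $m$ has $v$ in its $T$-subtree?'' reduces to one predecessor lookup on $\pre(v)$ followed by the subtree test $\pre(v)\le \pre(c)+\mathrm{size}(c)-1$. To answer a query $(u,v)$: compute $w=\lca_T(u,v)$ and retrieve $\alpha_u,\alpha_v$; then for $\hat u$, if $\alpha_u$ exists with $\mathrm{depth}(\alpha_u)\ge \mathrm{depth}(w)$ I return $\hat u=\alpha_u$ (it already lies on the $u$-to-$w$ sub-path); otherwise $\alpha_u$ (or $\bot$) coincides with the deepest $M$-ancestor $\alpha_w$ of $w$, and I query $C(\alpha_u)$ for the preorder-predecessor of $\pre(v)$, returning this $c$ if the subtree test passes and $\bot$ otherwise. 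The computation of $\hat v$ is symmetric, using $\pre(u)$ and $C(\alpha_v)$.

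The main obstacle is justifying the else branch: the retrieved $c$ must be exactly the shallowest $M$-vertex on the strict $w$-to-$v$ sub-path. Two observations suffice. First, $c$ is a strict descendant of $w$, for if $\mathrm{depth}(c)\le \mathrm{depth}(w)$ then $c$ would be an $M$-ancestor of $w$ strictly deeper than $\alpha_w$, contradicting maximality. Second, any $M$-vertex on the strict $w$-to-$v$ sub-path is either $c$, a proper descendant of $c$, or a vertex strictly between $\alpha_w$ and $c$ on the $T$-path; the last case contradicts $\pi_M(c)=\alpha_w$, so $c$ is indeed the shallowest. The remaining corner cases ($w\in M$, $\alpha_u=\bot$, or $u$ or $v$ itself marked) fall out of the same case split. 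In total, each query uses one LCA and $O(1)$ predecessor searches, giving $O(\pred(d,n))$ query time; preprocessing is $O(\sort(d,n))$ and the structure takes $O(d)$ space, as required.
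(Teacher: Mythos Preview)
Your proof is correct and takes essentially the same approach as the paper: both compute the nearest marked ancestor via a predecessor query on the $2d$ DFS entry/exit (equivalently, Euler first/last) timestamps, both compute the nearest strict $M$-ancestor map ($\pi_M$ in your notation, $\mu$ in the paper), and both build per-bucket predecessor structures on $\pi_M^{-1}(m)=C(m)$ sorted by preorder to locate the shallowest marked vertex on a root-to-descendant path. The only differences are cosmetic---you precompute the ``innermost interval'' label on each atomic piece whereas the paper distinguishes the $\pre$/$\post$ cases at query time, and you handle the two halves of the $u$--$v$ path in one case split rather than first reducing to the ancestor/descendant case---but the underlying ideas and the resulting bounds are identical.
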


\begin{proof}
Recall that $T$ is rooted.  Let $w = \lca(u,v)$.  
The first marked vertex on the $u$-to-$v$ path is either the first marked vertex on the $u$-to-$w$
path or the last marked vertex on the $v$-to-$w$ path.  
Thus, we can assume without loss of generality that $v$ is an ancestor of $u$.
Fix an arbitrary DFS of $T$ and let $\pre(x)$ (resp.~$\post(x)$) be the
time when $x$ is pushed onto (resp.~popped off) the stack during DFS.
Given $M$, we first sort the $\pre$ and $\post$ indices of its elements, in $O(\sort(d,n))$ time, which allows us to label each $x\in M$ with the nearest strictly ancestral $M$-vertex $\mu(x)$.  Here it is convenient to assume that the root is marked (honorarily, if not in $M$) so $\mu(x)$ is defined everywhere except the root.  We build a global predecessor structure on the set $\mathcal{S} = \{\pre(x), \post(x) \;:\; x\in M\}$ and local predecessor structures on the sets $\mathcal{S}_x = \{\pre(x') \;:\; x'\in \mu^{-1}(x)\}$, where $x\in M$ and $\mathcal{S}_x$ is the set of ``immediate'' descendants in $M$ connected by a path of non-$M$ vertices.
To answer a query $u,v$ (where $v$ is ancestral to $u$) we first find the closest marked ancestors $u',v'\in M$ as follows.  
Let $i$ be the predecessor of $\pre(u)$ in $\mathcal{S}$.  If $i = \pre(y)$ for some $y\in M$ then $x$ is an immediate descendant of $y$ and $u'=y$.  If $i = \post(y)$ then let $u' = \mu(y)$.  See Figure~\ref{fig:marked-ancestor}(a).  It follows that $u'$ is an ancestor of $u$ and that there are no other $M$-vertices on the path from $u$ to $u'$.  If $u'$ is ancestral to $v$ then there are no marked nodes on the $u$-$v$ path, so assume this is not the case.
In order to find the {\em last} marked vertex on the path from $u$ to $v$ we search for the predecessor of $\pre(u)$ in $\mathcal{S}_{v'}$, say $\pre(y)$.  Since there is {\em some} marked vertex on the path from $u$ to $v'$ it follows that $u$ is a descendant of $y$, which is a descendant of $v$ and that there are no other marked vertices on the path from $y$ to $v$.  See Figure~\ref{fig:marked-ancestor}(b).
\end{proof}

\begin{figure}
\centering
\begin{tabular}{c@{\hspace{2cm}}c}
\scalebox{1}{\includegraphics{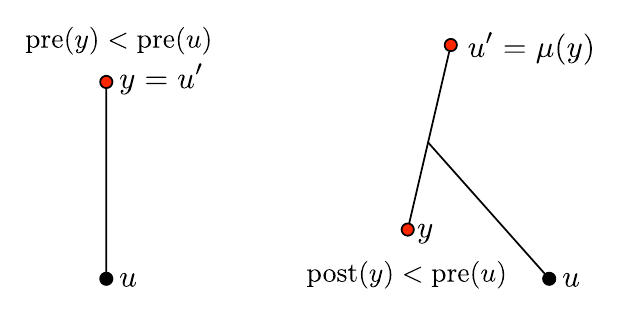}} &
\scalebox{1}{\includegraphics{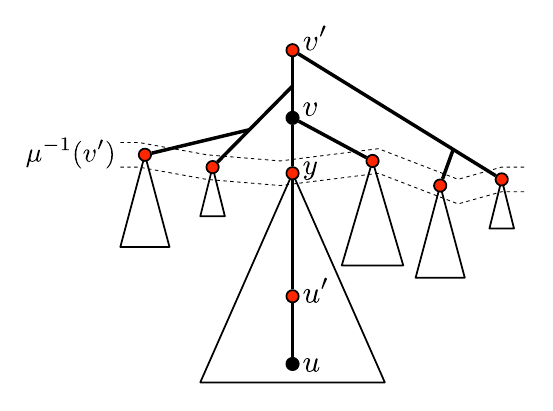}}\\
(a) & (b)
\end{tabular}
\caption{\label{fig:marked-ancestor}Marked vertices are shaded red.  (a) The nearest marked ancestor $u'$ of $u$
depends on whether the predecessor of $\pre(u)$ is $\pre(y)$ (left) or $\post(y)$ (right) for some marked $y$.
(b) $v'$ is the nearest marked ancestor of $v$.  The last marked ancestor $y$ on the $u$-to-$v$ path is in $\mu^{-1}(v')$.}
\end{figure}

We now have the following:

\begin{theorem}\label{Thm:EdgeFailures}
There is a data structure for planar undirected $n$-vertex graphs and any $d\geq 1$
that after $O(n)$ preprocessing time supports
\texttt{update}$(F)$ in $O(\sort(d,n))$ time, where $F\subseteq E$, $|F|\leq d$, and supports
\texttt{connected?}$(u,v)$ in $O(\pred(d,n))$ time.
\end{theorem}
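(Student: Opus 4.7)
The plan is to bolt together the structural reductions and the marked-ancestor machinery that have already been set up in the section. The preprocessing phase ignores $F$: we simply compute a rooted spanning tree $T$ of $G$, together with a constant-time least common ancestor data structure on $T$~\cite{HT84,BF-C00} and the $O(n)$-time preliminary component of Lemma~\ref{Lem:marked-ancestor}. Both of these take linear time, matching the claimed $O(n)$ bound, and are independent of $d$ and $F$.

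For \texttt{update}$(F)$ I would proceed in three stages. First, recover the planar embedding of the dual subgraph $G_F^*$ inherited from $G^*$: for each dual endpoint $v \in V(F)$, collect the incident edges of $F$ and sort them according to their cyclic position around $v$ in $G^*$, which costs $O(\sort(d,n))$ time in total and, by Edmonds' theorem, determines the embedding. Second, trace the face boundaries of $G_F^*$ in $O(d)$ time to assign every vertex of $V(F)$ (viewed as a primal vertex via duality) a superface label. Third, invoke Lemma~\ref{Lem:marked-ancestor} with $M = V(F)$ to build its $O(d)$-size query structure in $O(\sort(d,n))$ time. The total cost is $O(\sort(d,n))$, as required.

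For \texttt{connected?}$(u,v)$ I would combine Corollary~\ref{cor:connected} with Lemma~\ref{Lem:marked-ancestor}. The corollary reduces the query to determining whether $u$ and $v$ lie in the same superface of $G_F^*$. If neither $u$ nor $v$ is in $V(F)$, then some $u$-to-$v$ path in $T$ hits $V(F)$ exactly when the component containing $u$ in $G\setminus F$ differs from that containing $v$; more precisely, the endpoints of the walk enter and exit the superface structure through the first and last marked vertices $\hat u, \hat v$ on the tree path, and the answer is ``yes'' iff either these vertices fail to exist or they carry the same superface label. Lemma~\ref{Lem:marked-ancestor} returns $\hat u$ and $\hat v$ with one LCA query and $O(1)$ predecessor queries, for a total of $O(\pred(d,n))$ time. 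Cases in which $u$ or $v$ themselves lie in $V(F)$ are handled by treating them as their own ``first/last marked vertex,'' which does not affect the bound.

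The main thing that needs care, rather than a genuine obstacle, is the correspondence between primal vertices, dual faces, and superface labels used throughout: one must be consistent in viewing $u,v$ simultaneously as vertices of $G$ (for the tree $T$) and as faces of $G^*$ (for Corollary~\ref{cor:connected}). Once that bookkeeping is fixed, all time bounds follow directly from the cited results on integer sorting, predecessor search, and LCA, and the theorem is proved.
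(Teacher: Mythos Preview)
Your proposal is correct and follows essentially the same route as the paper: linear-time preprocessing of a rooted spanning tree $T$ with an $\lca$ structure, an \texttt{update} that builds the inherited embedding of $G_F^*$, labels the marked vertices by superface, and prepares the structure of Lemma~\ref{Lem:marked-ancestor}, followed by a query that finds $\hat u,\hat v$ and compares their labels. One small wording issue: the sentence ``the $u$-to-$v$ path in $T$ hits $V(F)$ exactly when the component containing $u$ in $G\setminus F$ differs from that containing $v$'' is false as stated (the tree path can meet marked vertices even when $u$ and $v$ are connected), but you immediately supply the correct criterion, so this does not affect the argument.
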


\section{Vertex and Edge Failures}\label{sec:MultiVertexEdge}

We now turn our attention to the scenario where both edges and
vertices can fail.  The additional challenge arises from high degree
vertices that, when failed, can greatly reduce the connectivity of the
graph.  Nevertheless we will show how to maintain dynamic connectivity
in time proportional to the number of failed vertices, rather than
their degrees.  Formally, we develop a data structure that, given a
planar, undirected graph $G = (V,E)$, an integer $d \geq 1$, and a
dynamic subset of at most $d$ failed edges and vertices, supports the
following operations:
\begin{enumerate}
\item \texttt{update}$(F)$: set $F$ to be the failed set of vertices
  and edges; 
\item \texttt{connected?}$(u,v)$: are vertices $u$ and $v$
  connected in $G\setminus F$?
\end{enumerate}
As before we may assume that $G$ is plane and connected.

\subsection{Vertex failures for triconnected planar graphs}\label{sec:vft}
In this subsection we shall assume that $G$ is triconnected. This
allows us to apply Barnette's theorem, which states that every triconnected planar graph
has a spanning tree of degree at most three~\cite{Barnette66}. 
Furthermore, such a degree-three tree, $T$,
can be found in linear time~\cite{Strothmann97}.  
In Section~\ref{sect:reduction-to-tri} we show that this assumption is basically without loss of generality:
we can reduce the problem on general graphs to triconnected graphs with an additive $\pred(d,n)$ slowdown in the query and update algorithms.

Assume for simplicity that only vertices fail.  Section~\ref{subsec:MultiVertexEdge} describes
modifications needed when there is a mix of vertex and edge failures.
Let $C$ be the clockwise cycle that bounds the infinite and only face
of $T$.  $C$ is an embedding-respecting Euler tour of $T$ that visits
each edge twice and each vertex at most three times.  For a non-empty
subset $F$ of $V$, partition $C$ into maximal subpaths whose internal
nodes are not in $F$.  Denote this set of paths by ${\mathcal
P}_F$. Note that $|{\mathcal P}_F| \leq 3|F|$ and therefore that a connected component of $T\backslash F$
is made up of possibly many paths in ${\mathcal P}_F$.
Assign the connected components of $T \setminus F$ distinct colors 
and label each path in ${\mathcal P}_F$ with the color of its component.

Let $e_1$ and $e_2$ be the edges before and after a particular copy of
vertex $v$ in the order given by $C$.  Let $f_i$ be the face of $G$ to
the left of $e_i$. Root the dual spanning tree $T^*$ at the infinite
face of $G$ and let $\ell = \lca_{T^*}(f_1,f_2)$.  If $\ell$ is not
the root of $T^*$, let $e_v$ be the parent edge of $\ell$ in $T^*$.
Let $L$ be the set of edges $e_v$ (if well defined) with neither
endpoint in $F$ for all failed vertices $v\in F$ (according to their
multiplicity in $C$).  Note that $|L| \leq 3d$.  By duality, we consider
$L$ as a subset of primal edges.  Considered as an edge of the primal
graph, $e_v$ forms a cycle with $T$ that $v$ is on; that is, $e_v$ witnesses
an alternate connection should $v$ fail.

We define an auxiliary graph $H_F$ that will succinctly represent the
connectivity of ${\mathcal P}_F$. The nodes of $H_F$ are the paths in
${\mathcal P}_F$.  Two path-nodes $P_1, P_2 \in {\mathcal P}_F$ are
adjacent in $H_F$ iff
\begin{itemize}
\item $P_1$ and $P_2$ have the same color and are consecutive paths in
  $C$ among paths of the same color, or
\item there is an edge in $L$ between the interior of $P_1$
      and the interior of $P_2$. 
\end{itemize}
There are at most $|\mathcal P_F|$ edges of the first type and $|L|$
edges of the second type.
Since $|\mathcal P_F| \leq 3d$ and $|L| \leq 3d$, $|H_F| = O(d)$.
\begin{Lem}\label{Lem:ConnectedComps}
  Paths in $\mathcal P_F$ are connected in $G\setminus F$ iff they are
  connected in $H_F$.
\end{Lem}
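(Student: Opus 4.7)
For the easy direction, that $H_F$-connectivity implies $G\setminus F$-connectivity, I would induct on the length of an $H_F$-path. A same-color edge between $P_1,P_2$ means both lie in one component of $T\setminus F\subseteq G\setminus F$, which is connected. An $L$-edge corresponds to a non-tree primal edge $e_v$ with both endpoints outside $F$ and lying in the interiors of $P_1,P_2$; since the interiors of $\mathcal{P}_F$-paths avoid $F$, one may walk within $P_1$, across $e_v$, and within $P_2$, all inside $G\setminus F$.

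For the hard direction, I would take a connecting path $\pi$ in $G\setminus F$ from a vertex of $P_1$ to a vertex of $P_2$ and decompose $\pi$ into maximal tree-edge subpaths joined by single non-tree edges. Each tree-edge subpath sits inside one component of $T\setminus F$, so all $\mathcal{P}_F$-paths it visits share a color and are linked in $H_F$ by color-edges. It then suffices to show, for every non-tree edge $e=(x,y)$ of $\pi$, that the $\mathcal{P}_F$-paths $P_x,P_y$ containing $x,y$ lie in the same $H_F$-component. If $x,y$ are in one component of $T\setminus F$ this is again by color-edges; otherwise the tree path from $x$ to $y$ in $T$ must pass through at least one failed vertex.

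The key structural input, highlighted just before the lemma, is that each $e_v$ forms a cycle with $T$ on which $v$ lies. If $v$ is a failed vertex on the tree path from $x$ to $y$ with incident tree edges $e_1,e_2$ on this path, then $e_1,e_2$ are consecutive at some copy of $v$ in the Euler tour $C$, and the associated $e_v$ has its primal fundamental cycle crossing $v$ at this copy. Hence, provided its endpoints avoid $F$, $e_v\in L$ bridges two $\mathcal{P}_F$-paths separated by $v$. Iterating over the failed vertices along the $x$-to-$y$ tree path and inserting color-edges within intermediate components of $T\setminus F$ would produce an $H_F$-path from $P_x$ to $P_y$. The main obstacle I anticipate is verifying rigorously, from the dual-LCA definition of $e_v$, that its endpoints actually end up in $\mathcal{P}_F$-paths on opposite sides of $v$'s corner. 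A careful planar-topology argument is needed to show that the parent edge in $T^*$ of $\lca_{T^*}(f_1,f_2)$ bounds the wedge at $v$'s corner from the outside, so that its primal fundamental cycle encloses $f_1,f_2$ and must therefore cross $v$; edge cases where the LCA is the root of $T^*$ or where $e_v$'s endpoints happen to lie in $F$ will need to be absorbed by the color-edges along the outer boundary.
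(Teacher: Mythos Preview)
Your easy direction is fine, and your reduction of the hard direction to the case of a single non-tree edge $e=(x,y)$ with $x,y$ in differently colored paths is exactly what the paper does. The gap is in how you propose to handle that single edge.

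Your plan is to walk the tree path from $x$ to $y$ and, at each failed vertex $v$ on it, use the $L$-edge $e_v$ (for the appropriate corner of $v$) to hop across. The problem is that even when $e_v\in L$, you only know its primal fundamental cycle passes through $v$; its two endpoints sit \emph{somewhere} in the two subtrees of $T\setminus v$, not necessarily in the components of $T\setminus F$ adjacent to $v$ along the $x$--$y$ path. So the ``intermediate components'' you want to bridge with color-edges between consecutive $e_{v_i}$-endpoints need not coincide, and you have set up no well-founded recursion to repair this. The further case where an endpoint of $e_v$ lies in $F$ (so $e_v\notin L$) you wave away as ``absorbed by color-edges along the outer boundary''; there is no such mechanism in $H_F$, and this is precisely where the argument breaks.

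The paper bypasses all of this with a different induction: on the number $k$ of primal faces enclosed by the bounded region $f$ of $T\cup\{e\}$. Let $f_1$ be the face of $G$ inside $f$ incident to $e$. If some failed vertex $v$ lies on $\partial f_1$, one checks from the dual-LCA definition that $e_v=e$, so $e$ itself is an $L$-edge and $(P_1,P_2)$ is an $H_F$-edge outright. If no failed vertex lies on $\partial f_1$, replace $e$ by the remaining boundary edges of $f_1$; each such edge is either a tree edge (hence inside a single $\mathcal P_F$-path) or a non-tree edge whose bounded region is strictly contained in $f\setminus f_1$ and so encloses fewer than $k$ faces, and one walks around $\partial f_1\setminus\{e\}$ invoking the inductive hypothesis. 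The key insight you are missing is that the relevant $L$-edge is not $e_v$ for $v$ on the $x$--$y$ tree path, but rather that $e$ \emph{is} some $e_v$ once you peel faces down to a failed vertex.
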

\begin{proof}
  Consider distinct paths $P_1$ and $P_2$ in ${\mathcal P}_F$.  It is
  clear from the definition of $L$ and $H_F$ that if there is an edge
  $(P_1,P_2)$ in $H_F$ then the interior of $P_1$ and the interior of
  $P_2$ are connected in $G\setminus
  F$. This implies the ``if'' part.

  Since path-nodes of a given color class are connected by a cycle in
  $H_F$ given by the paths order along $C$.  If two paths are of the
  same color, their path-nodes will be connected in $H_F$.  So, for
  the ``only if'' part, it suffices to show that if $P_1$ and $P_2$ are
  different colors but (by transitivity of connectivity) there is a single
  edge $e\in G\setminus F$ between the interior of $P_1$ and the interior
  of $P_2$, then they are connected in $H_F$.

  So let $e =
  (u_1,u_2)$ be such an edge where $u_i\in P_i$.  Let
  $f_1$ and $f_2$ be the faces incident to $e$ in $G$ such that $f_1$
  is the child of $f_2$ in $T^*$. Let $f$ be the bounded face of
  $T\cup\{e\}$ (which contains $e$ and encloses $f_1$).  Let $v_i$ be
  the failed vertex that is an endpoint of $P_i$ on the bounded face
  of $T\cup\{e\}$.

  We continue by induction on the number of faces $k\geq 1$ of $G$
  contained in $f$. In the base case $k = 1$ and $f = f_1$.  Here
  $e_{v_1} = e_{v_2} = e$, so $e \in L$ and $(P_1,P_2)$ is an edge of
  $H_F$.  Now assume $k > 1$ and that the inductive hypothesis holds for
  smaller values. 

  If $f_1$ has a failed vertex $v$ on its boundary, then we argue $e_v
  = e$ (and so $(P_1,P_2)$ is an edge of $H_F$).  Consider the copy of
  $v$ in the traversal of $C$ that has $f_1$ to the left.  Let $g_1$
  and $g_2$ be the faces to the left of the edges before and after
  this copy of $v$ in $C$.  Since $e \in T^*$, $e$ is an ancestral edge
  of $g_1$ and $g_2$.  Since $g_1,g_2,f_1$ all contain $v$ as a
  boundary vertex, $\lca_{T^*}(g_1,g_2)=f_1$.

Suppose that $f_1$ has no failed vertices on the boundary.  For each edge $e'\neq e$ on the boundary of $f_1$, 
if $e'\in T$ then $e'$ belongs to a path of $\mathcal P_F$. Otherwise, the bounded face of
 $T\cup\{e'\}$ is contained in $f$ and does not contain $f_1$ so it
 contains fewer than $k$ faces of $G$. Then the paths of $\mathcal
 P_F$ containing the endpoints of $e'$ in their interior are
connected in $H_F$ if they
 have distinct colors (by the inductive argument) or if they have the
 same color (by the start of this proof).  Since this holds for every
 edge $e'$ of $f_1\setminus\{e\}$, $P_1$ and $P_2$ are connected in
 $H_F$, as desired.
\end{proof}

\subsection{The data structure}
In a linear-time preprocessing step, we compute $T$, $T^*$, and $C$
and initialize an $\lca$ data structure.

\paragraph{\texttt{update}$(F)$:} We build $H_F$ for the input set of
failed vertices $F$.  Let $f$ be the sum of the degrees in $T$ of the
vertices in $F$; note, $f \leq 3d$.  Let $F'$ be the multiset of
failed vertices according to their multiplicity along $C$, i.e., their 
degree in $T$.  Again, $|F'| = f$.  Sort the vertices of $F'$ according to
their order along $C$.  This provides an implicit representation of
${\mathcal P}_F$.  Label these paths according to their ordinal in
$F'$; that is, path $P_i$ is the path starting with the $i^{th}$
vertex in sorted $F'$.  This takes time $\sort(f,n)$.

Greedily assign colors to the paths, considering the paths in order.
In each iteration we assign colors to all the paths in a given color
class.  Upon considering path $P_i$, if $P_i$ has not yet been
colored, assign it a new color.  Check the second last endpoint of
$P_i$ and find the edge $e$ that precedes that vertex in $C$ {\em
  after} $P_i$.  Let $P_j$ be the path that starts with edge $e$
(which can be found by $e$'s starting point in $C$).  Color $P_j$ with
same color as $P_i$.  Repeat until returning to $P_i$.  This takes
time $O(f)$.

Build the set $L$ of edges.  For each edge in $L$, identify the paths in $\mathcal{P}_F$
that contain its endpoints.  We do this in bulk.  Sort the set of
endpoints of $L$ along with $F'$, that is, $V(L) \cup F'$, according to their
order along $C$.  Traverse this order, assigning each endpoint in
$V(L)$ the path corresponding to the last failed vertex visited.  This
takes time $\sort(f,n)$.

From $L$ (with endpoints labelled with the appropriate path in
${\mathcal P}_F$) and the colors of ${\mathcal P}_F$, build $H_F$.
Compute the connected components of $H_F$ and label the path-nodes of
$H_F$ with the name of the connected component it belongs to.  This
takes time $O(f)$.

The total time for {\tt update} is bounded by $\sort(f,n) =
O(\sort(d,n))$.

\paragraph{\texttt{connected?}$(u,v)$:} We may assume $u,v\notin F$.  In $O(\pred(d,n))$ time 
identify any path $P_u$ containing $u$ in the interior and any path $P_v$ containing $v$ in the interior.  
By Lemma~\ref{Lem:ConnectedComps}, $u$ and $v$ are
connected in $G\setminus F$ iff $P_u$
and $P_v$ are in the same connected component of $H_F$.  The latter
condition is checked in constant time given the labels of the connected components.

\subsection{Dealing with vertex and edge failures simultaneously}\label{subsec:MultiVertexEdge}
We now extend our results above for triconnected planar graphs to handle both vertex and edge failures.
As before, we precompute $T$, $T^*$, $C$, and a structure for answering lca queries. 
An \texttt{update}-operation now gets as input
a set $F\subseteq V\cup E$ of size
at most $d$. For the failures in $F\cap V$, we partition the tree into set $\mathcal P_F$ of paths as before. Edges of $F$ that
belong to $T$ can be treated in a way similar to vertex failures (this can be seen easily by, say, introducing a degree two-vertex
to the interior of each such edge and regarding it as a failed vertex).

What remains is to deal with failed edges that go between paths from $\mathcal P_F$. The only modification we need
to make to \texttt{update} is that when we find the $\lca$-vertex $l$ in $T^*$ of two faces of $G$ incident to a failed vertex or edge
of $T$, we need to check if the edge from $l$ to its parent in $T^*$ has failed. If so, we need to walk up the path from $l$ to the
root $r$ of $T^*$ until we find a vertex $l'$ whose parent edge has not failed or $l' = r$. Our algorithm then has the
same behaviour as the \texttt{update} operation of the previous subsection would have when applied to $G\setminus (F\cap E)$ with
input $F\cap V$.

The only issue is the additional running time needed to traverse paths to $r$ in $T^*$. We deal with this as follows. When
a path has been traversed, we store pointers from all visited vertices to the last vertex on the path (i.e., the vertex closest to
the root of $T^*$). If in another traversal
we encounter a vertex with a pointer, we use this pointer to jump directly to that last vertex. The total number
of edges traversed in $T^*$ is then bounded by the number of failed edges which is at most $d$. Hence, \texttt{update} still runs in
$\sort(d,n)$ time. The \texttt{connected?}-operation can be implemented as before.

By Theorem~\ref{thm:reduction-to-tri} the algorithm above extends to general planar graphs with no asymptotic slowdown.

\begin{theorem}\label{Thm:VertexEdgeFailures}
There is a data structure for any planar undirected $n$-vertex graph that, after $O(n)$ preprocessing time, supports
\texttt{update}$(F)$ in $O(\sort(d,n))$ time, where $F\subseteq V\cup E$, $|F|\leq d$, and supports
\texttt{connected?}$(u,v)$ in $O(\pred(d,n))$ time.
\end{theorem}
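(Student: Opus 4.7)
The plan is to first establish the theorem for triconnected planar graphs, then lift it to arbitrary planar graphs via Theorem~\ref{thm:reduction-to-tri}. For the triconnected case, the entire heavy machinery goes into preprocessing: in $O(n)$ time I would compute a degree-$3$ spanning tree $T$ guaranteed by Barnette's theorem~\cite{Barnette66,Strothmann97}, its dual spanning tree $T^*$ rooted at the outer face, an embedding-respecting Euler tour $C$ of $T$, DFS indices allowing constant-time predecessor structure on $C$, and a constant-time $\lca$ structure on $T^*$.

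For \texttt{update}$(F)$ with $|F|\le d$, the first step is to split $F$ into vertex failures, failed tree edges, and failed non-tree edges. Sorting the vertex failures and tree-edge failures along $C$ in $O(\sort(d,n))$ time yields the partition $\mathcal P_F$ into at most $3d$ maximal alive subpaths, where a failed tree edge is conceptually treated like a failed degree-$2$ vertex. A single linear pass over $C$ then assigns each path the color of its component of $T\setminus F$, exactly as in Section~\ref{sec:vft}. Next, the witness set $L$ of tree-cycle edges is built: for each occurrence of a failed vertex or tree edge on $C$ I compute $\ell = \lca_{T^*}(f_1,f_2)$ and take the parent edge of $\ell$ in $T^*$. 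To accommodate failed non-tree edges, I would follow Section~\ref{subsec:MultiVertexEdge} and walk up $T^*$ past consecutive failed parent edges, installing path-compression pointers so that no dual edge is ever traversed twice across the whole update; since at most $|F\cap E|\le d$ dual edges can be traversed this way, the total extra work is $O(d)$. A second sort, this time of $V(L)\cup F'$ along $C$, labels each endpoint of $L$ with its containing path in $\mathcal P_F$, after which I build the auxiliary graph $H_F$ (of size $O(d)$) and compute its connected components in $O(d)$ time. The total cost is $O(\sort(d,n))$.

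For \texttt{connected?}$(u,v)$ with $u,v\notin F$, one predecessor query against the sorted positions of $F'$ on $C$ locates paths $P_u,P_v\in\mathcal P_F$ whose interiors contain $u$ and $v$; by Lemma~\ref{Lem:ConnectedComps} and its edge-failure extension, $u$ and $v$ are connected in $G\setminus F$ iff $P_u$ and $P_v$ carry the same component label of $H_F$, a constant-time check. This gives $O(\pred(d,n))$ per query. Finally, I would invoke Theorem~\ref{thm:reduction-to-tri} to transfer the result from triconnected to arbitrary planar graphs; the reduction is stated to incur no asymptotic overhead on either operation.

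The step I expect to be the main obstacle is the coupling between vertex failures and non-tree-edge failures. In the pure vertex case a single $\lca$ query produces a legitimate cycle witness for each failed vertex, but once non-tree edges may fail the immediate candidate $e_v$ can itself be in $F$, and naively re-searching upward in $T^*$ could blow up to $\Omega(d^2)$. The path-compressed upward walk described above is the key amortization argument, and it relies on two delicate points: that the pointers are installed and reset per \texttt{update} call, and that each dual edge traversal charges to a distinct failed edge. Verifying those carefully is where the main work of the proof lies.
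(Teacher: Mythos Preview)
Your proposal is correct and follows essentially the same approach as the paper: the triconnected case via Barnette's degree-$3$ tree, Euler tour $C$, path partition $\mathcal P_F$, the $\lca$-based witness set $L$ with the path-compressed upward walk in $T^*$ to skip failed non-tree edges, the auxiliary graph $H_F$, and finally the reduction of Theorem~\ref{thm:reduction-to-tri}. Your identification of the amortized upward-walk argument as the crux matches the paper's own treatment in Section~\ref{subsec:MultiVertexEdge}.
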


\section{Decremental Subgraph Connectivity}\label{sec:SingleVertexEdge}

In this section, we consider the model in which updates and queries
are intermixed. We only allow failures and not recoveries but no longer assume
an upper bound $d$ on $|F|$, that is, $F$ is a growing set.

As in Section~\ref{sec:MultiVertexEdge}, we first assume that $G$ is
triconnected and that only vertices can fail. Define $T$, $T^*$, $C$,
$L$, and $F$ as before and build an $\lca$ data structure for $T^*$.
Initially $L$ and $F$ are empty.  Redefine $H_F$ to be $(V,T\setminus F \cup L)$, that is,
we no longer contract paths in $\mathcal{P}_F$.
For each vertex $v$ we maintain a list $L(v)$ of the $\lca$-edges $L$
incident to $v$.  Represent $L$ as a subset of marked edges of $G$.

When a vertex $v$ fails, unmark $L(v)$ in $L$ since $L$ should only
contain edges not incident to failed vertices. For each of the
at-most-three occurrences of $v$ along $C$, find the corresponding
$\lca$-edge $e_v$ (if it exists); if $e_v$ has no endpoint in $F$,
mark $e_v$ as a member of $L$ and add $e_v$ to the lists of its
endpoints.

It follows easily from the definition of $L$ and of $H_F$ that both
$L$ and $H_F$ are updated correctly after each
failure. Lemma~\ref{Lem:ConnectedComps} (that two vertices are
connected in $G\setminus F$ iff they are connected in $H_F$) holds
with these definitions; the only difference is the interior of the
paths from ${\mathcal P}_F$ are explicitly represented with the edges
from $T\setminus F$.  Note that showing the requisite connectivity for
Lemma~\ref{Lem:ConnectedComps} does not depend on $\lca$-edges that
have a failed endpoint.  Removing (unmarking) the edges $L(v)$ as
members of $L$ when $v$ fails is equivalent to not including them in
$L$ as used in the proof of Lemma~\ref{Lem:ConnectedComps}.

We maintain $H_F$ with an oracle which allows us to delete an edge and
an isolated vertex in $O(\log n)$ 
time~\cite{EppsteinITTWY92} since a vertex failure results in at
most three edges of $T$ being deleted from $H_F$. Furthermore, at most three
edges are added to $L$ and each of these edges is added only to the
two lists associated with its endpoints. Hence if $F$ is the final set
of deleted vertices, then a total of $O(|F|)$ edges are added or
removed from $H_F$ and there are at most $O(|F|)$ updates to $L$ and
to the lists $L(v)$. It follows that each deletion takes $O(\log n)$
amortized time. A connectivity query can be answered in $O(\log n)$
worst-case time with the oracle associated with $H_F$.

So far we considered only vertex deletions. Edge deletions can be
handled in the same way as in Section~\ref{subsec:MultiVertexEdge}. 
Using the reduction from general planar graphs to triconnected planar graphs in Theorem~\ref{thm:reduction-to-tri-fully-dynamic}
we have the following. 

\begin{theorem}
  There is a data structure for decremental subgraph connectivity in triconnected planar graphs
  that, after $O(n)$ preprocessing time, allows an intermixed sequence of vertex/edge failures
  and connectivity queries in $O(\log n)$ amortized time per failure (i.e., $f$ failures in $O(f\log n)$ time total) and $O(\log n)$ worst-case time per query.
  In general planar graphs there is a data structure requiring $O(\log^2 n)$ amortized time per failure and $O(\log^2 n)$ worst-case time per query.
\end{theorem}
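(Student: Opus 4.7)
The plan is to implement the data structure along the lines already sketched in this section and then verify correctness and the running-time bounds. The argument breaks naturally into four parts.

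First, I would describe the preprocessing: in linear time build a Barnette degree-$3$ spanning tree $T$ of $G$, its dual tree $T^*$, the Eulerian boundary walk $C$ around $T$, an $\lca$ structure on $T^*$, and install the dynamic planar connectivity oracle of~\cite{EppsteinITTWY92} on $H_F=(V,(T\setminus F)\cup L)$, starting with $F=L=\emptyset$ and all lists $L(v)$ empty.

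Second, I would describe the handling of a vertex failure $v$. I first delete from $H_F$ each edge in $L(v)$ (unmarking it in $L$ and removing it from the list of its other endpoint), then delete from $H_F$ the at-most-three edges of $T$ incident to $v$, and finally mark $v$ as failed. For each of the at-most-three copies of $v$ on $C$, I compute the candidate $\lca$-edge $e_v$ in $T^*$; if the immediate dual-parent edge has already failed in the primal, I walk up $T^*$, exactly as in Section~\ref{subsec:MultiVertexEdge}, until I reach an unfailed parent edge or the root, using the pointer-compression shortcut so that the total $T^*$ traversal over the whole sequence is $O(|F|)$. If the resulting $e_v$ has no endpoint in $F$, I mark it in $L$, append it to $L(u)$ and $L(w)$ for its endpoints, and insert it into $H_F$. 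Edge failures are reduced to vertex failures by the subdivision trick of Section~\ref{subsec:MultiVertexEdge}, and non-tree edge failures simply remove the edge from $H_F$ if present.

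Third, I would verify correctness by invoking Lemma~\ref{Lem:ConnectedComps}. The redefinition $H_F=(V,(T\setminus F)\cup L)$ differs from the earlier one only in that the interiors of the paths of $\mathcal{P}_F$ are now represented explicitly by the edges of $T\setminus F$ rather than being contracted; the proof of Lemma~\ref{Lem:ConnectedComps} does not use $\lca$-edges with a failed endpoint, so unmarking $L(v)$ at the moment $v$ fails is equivalent to excluding those edges from $L$ in the static proof. Hence $u,v\in V\setminus F$ are connected in $G\setminus F$ iff they lie in the same component of $H_F$, which is exactly what the dynamic planar oracle on $H_F$ answers in $O(\log n)$ worst-case time.

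Fourth, I would handle the amortized analysis. Each vertex failure triggers $O(1)$ edge deletions and $O(1)$ edge insertions in $H_F$ (plus $O(1)$ list updates), aside from the walks in $T^*$, which the pointer shortcut charges $O(1)$ amortized to each failed tree edge. Each $H_F$-update costs $O(\log n)$ in the Eppstein et al.\ oracle, so the amortized cost per failure is $O(\log n)$. The general-planar case follows by composing this triconnected data structure with Theorem~\ref{thm:reduction-to-tri-fully-dynamic}, which adds a $\log n$ factor.

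The main obstacle is the amortization of the $T^*$-walks that locate the current $\lca$-edges $e_v$: a naive implementation could spend $\Theta(\mathrm{depth}(T^*))$ per failure. The pointer-shortcut trick from Section~\ref{subsec:MultiVertexEdge} must be carried over faithfully to the decremental setting so that the cumulative walking cost across the entire failure sequence is bounded by the number of failed edges; this is precisely what preserves the $O(\log n)$ amortized bound rather than degrading to $O(\log^2 n)$ on triconnected graphs.
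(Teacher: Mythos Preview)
Your proposal tracks the paper's argument and is correct for vertex failures; the gap is in your handling of non-tree edge failures. You write that such a failure ``simply remove[s] the edge from $H_F$ if present,'' but this is not enough. If $e'$ is currently the $\lca$-edge $e_w$ for some earlier-failed occurrence of $w$, removing $e'$ from $L$ discards that occurrence's witness without replacing it. The invariant underlying Lemma~\ref{Lem:ConnectedComps} (in the form of Section~\ref{subsec:MultiVertexEdge}) is that $L$ contain, for every failed occurrence, the first non-failed parent edge in $T^*$ above its $\lca$ with respect to the \emph{current} failure set; once $e'$ fails, that edge moves strictly upward, and your $L$ simply omits it. Consequently $H_F$ can have strictly more components than $G\setminus F$, and queries can be answered incorrectly.

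A concrete failure on the $3$-cube: outer square $ABCD$, inner square $EFGH$, spokes $AE,BF,CG,DH$; take $T=\{AB,BC,BF,FE,FG,GH,HD\}$ and root $T^*$ at the outer face. Fail vertex $G$: its two occurrences give $L=\{CD,EH\}$. Now fail the non-tree edges $EH$ and then $CD$. Your rule leaves $L=\emptyset$, so $H_F=\{AB,BC,BF,FE,HD\}$ separates $\{D,H\}$ from the rest; yet $DA$ survives in $G\setminus F$ and connects $D$ to $A$, so \texttt{connected?}$(A,D)$ must return true. The fix is the natural decremental reading of Section~\ref{subsec:MultiVertexEdge}: when a non-tree edge $e'\in L$ fails, continue the walk up $T^*$ from the child side of $e'$ (reusing the same pointer shortcuts) to the next non-failed parent edge and insert that edge into $L$ and $H_F$ if its endpoints are alive. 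This adds $O(1)$ oracle updates per edge failure and keeps the total $T^*$-walking at $O(|F|)$, so the $O(\log n)$ amortized bound is preserved.
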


\section{Reductions to Triconnected Graphs}\label{sect:reduction-to-tri}

In this section we give the reductions from general (planar) graphs to triconnected (planar) graphs that were used in 
Sections~\ref{sec:MultiVertexEdge} and \ref{sec:SingleVertexEdge}.  Although the algorithm from Section~\ref{sec:SingleVertexEdge} handles only failures, the reduction allows both failures and recoveries.

\begin{theorem}\label{thm:reduction-to-tri} {\bf (The $d$-Failure Model)}
Suppose there is a connectivity oracle $\mathcal{A}_3$ for any triconnected (planar) graph $G$ with the following parameters.
The preprocessing time is $P_3(n,m)$, 
the time to update the structure after $d$ edge or vertex failures
is $U_3(d,n)$, and the time for a connectivity query is $Q_3(d,n)$.  Then there is a connectivity oracle $\mathcal{A}$
for any (planar) graph $G$ with parameters $P = O(P_3)$, $U = O(U_3 + \sort(d,n) + d\cdot (Q_3+\pred(d,n)))$, and $Q = O(Q_3 + \pred(d,n))$.
\end{theorem}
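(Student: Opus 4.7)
The plan is to decompose $G$ into its triconnected components via the block-cut tree (handling 1-cuts) together with SPQR trees on each 2-connected block (handling 2-cuts), and to reduce connectivity in $G\setminus F$ to queries on the individual triconnected skeletons. The intuition is that connectivity in $G\setminus F$ is determined by (i) connectivity within each triconnected skeleton under its local failures, and (ii) whether the cut vertices and separation-pair vertices that glue these skeletons together survive $F$.

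For preprocessing, I would compute the block-cut tree and, for each 2-connected block, its SPQR tree, then build the oracle $\mathcal{A}_3$ on each R-node (a planar triconnected skeleton, since planarity is inherited from $G$). Because the R-node skeletons have total size $O(n+m)$, the total preprocessing is $O(P_3(n,m))$; the degenerate S-nodes (cycles) and P-nodes (bundles of parallel virtual edges on two vertices) admit trivial ad hoc oracles. I would also build a global skeleton tree $\mathcal{T}$ interleaving the block-cut and SPQR trees, and equip it with an LCA oracle and the marked-ancestor machinery of Lemma~\ref{Lem:marked-ancestor}.

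On \texttt{update}$(F)$: in $O(\sort(d,n))$ time, identify the $O(d)$ triconnected components touched by $F$ and forward each induced failure set $F|_C$ to the corresponding local oracle $\mathcal{A}_3(C)$ in $O(U_3(d,n))$ total time. For each affected component $C$, probe $\mathcal{A}_3(C)$ at most $O(d)$ times to partition the $O(d)$ \emph{boundary} vertices of $C$ (the separation-pair and cut-vertex endpoints that appear as $\mathcal{T}$-neighbours) into equivalence classes under connectivity in $C\setminus F|_C$; summed over components, this costs $O(d\cdot Q_3(d,n))$. Because an unaffected triconnected component keeps all its boundary vertices mutually connected, contract each such component to a single super-node. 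Assembling these pieces yields an auxiliary graph $H$ of size $O(d)$ whose vertices are these super-nodes and equivalence classes, and whose edges are the surviving glue --- separation-pair or cut-vertex links whose endpoints lie outside $F$. Computing the connected components of $H$ and tagging each boundary vertex with its $H$-component finishes the update.

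On \texttt{connected?}$(u,v)$: use the marked-ancestor structure in $O(\pred(d,n))$ time to locate each of $u,v$ in $\mathcal{T}$ and to find the first affected component on each side of their $\mathcal{T}$-path. If no affected component separates them and no failed cut vertex lies on the path (checkable by the same structure), answer "yes" immediately. Otherwise, map $u$ and $v$ to their $H$-representatives: a vertex in an unaffected component maps to its super-node at no cost, while a vertex in an affected component $C$ requires one $\mathcal{A}_3(C)$ query to identify its equivalence class; this can be arranged so that only one such call is needed per query (reduce to the case in which at most one of $u,v$ sits in an affected component, or compose the two map-to-$H$ probes into one $Q_3$ call on the relevant component). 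The answer is "yes" iff the $H$-representatives coincide. The main obstacle is exactly this single-$Q_3$ discipline: the update must precompute sufficient tags on boundary vertices (their $H$-component, plus pointers from interior vertices to their host triconnected skeleton) so that at query time one oracle call suffices to locate an arbitrary vertex's $H$-representative. Handling the degenerate cases --- $u$ or $v$ itself failed, or the $\mathcal{T}$-path passing through failed separation pairs --- is where the technical care lies.
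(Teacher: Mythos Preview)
Your high-level plan --- block-cut tree plus SPQR trees, $\mathcal{A}_3$ on each $R$-skeleton, marked-ancestor navigation via Lemma~\ref{Lem:marked-ancestor}, and an $O(d)$-size auxiliary structure capturing global connectivity --- matches the paper (Sections~\ref{sect:gen-to-bi}--\ref{sect:bi-to-tri}). But there is a genuine gap: you do not handle \emph{virtual edge failures}. When you ``forward $F|_C$ to $\mathcal{A}_3(C)$'' and partition the boundary vertices of $C$ by connectivity in $C\setminus F|_C$, you are implicitly treating every virtual (polar) edge in the model graph $M_C$ as present. A virtual edge $\pi(u_i)$ stands for the subgraph $G_{u_i}$, and failures \emph{there} may disconnect its endpoints even though $F|_C$ is empty on that edge. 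Concretely: take an $S$-node whose model cycle is $a\,b\,c\,d\,a$ with $bc$ virtual (to a child $R$-node) and the other three edges real; fail $a$, and fail enough of the child to disconnect $b$ from $c$. Your $S$-oracle, seeing only the failure $a$ and an intact edge $bc$, outputs a single class $\{b,c,d\}$; in $H$ this class is glued to both of the $R$-node's classes (one containing $b$, one containing $c$), falsely certifying that $b$ and $c$ are connected in $G\setminus F$.

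The paper's fix is a bottom-up \emph{blue/purple} pass. A node is coloured blue if its poles \emph{might} be disconnected in $G_u^-\setminus F$ (criteria (1)--(3); Lemma~5 bounds these by $4d$), and a postorder traversal of blue nodes determines which are \emph{purple}, meaning the poles are actually disconnected. The oracle $\mathcal{A}_u^-$ at a blue $R$-node is then built with $\pi(u)$ \emph{and} the polar edges of all purple children declared failed --- this is precisely the virtual-edge propagation your sketch omits. Two smaller issues ride on the same point: your ``$O(d)$ boundary vertices'' claim only holds once you restrict attention to poles of \emph{coloured} children (an $R$-node may have arbitrarily many uncoloured children), and your assertion that a single $Q_3$ call suffices per query is stronger than needed or argued --- the theorem asks only for $O(1)$ calls, which is what the paper's $\rep/\rep^*/\rep^{**}$ scheme delivers.
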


\begin{theorem}\label{thm:reduction-to-tri-fully-dynamic} {\bf (The Fully Dynamic Subgraph Model)}
Suppose there is a connectivity oracle $\mathcal{A}_3$ for any
triconnected (planar) graph $G$ with the following parameters.
The preprocessing time is $P_3(n,m)$, the time to process a vertex
failure or recovery is $U_3^{v}(n)$,
the time to process an edge failure or recovery $U_3^e(n)$, and the
time for a connectivity query is $Q_3(n)$.
Then there is a connectivity oracle $\mathcal{A}$ for any (planar)
graph $G$ with parameters
$P = O(P_3)$, $U^e = O((U_3^e + Q_3 + \log n/\log\log)\log n)$, $U^v = O(U_3^v
+ (U_3^e + Q_3 + \log n/\log\log n)\log n)$,
and $Q = O((Q_3 + \log n/\log\log n)\log n)$.
\end{theorem}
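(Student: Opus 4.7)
The plan is to reduce general graphs to triconnected components via a two-level decomposition and then answer dynamic queries by climbing a balanced hierarchy built over this decomposition.

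First, I would preprocess $G$ by constructing its block-cut tree, and then decomposing each biconnected block via its SPQR tree. Together these give a hierarchy $\mathcal{H}$ on $O(n)$ nodes whose leaves are either triconnected planar graphs, polygons, or bonds, glued together by virtual edges at cut vertices and separation pairs. On each triconnected leaf $G_\tau$ I would install an instance of $\mathcal{A}_3$; polygons and bonds admit trivial $O(1)$ oracles. Since $\sum_\tau |V(G_\tau)| = O(n)$ and $\sum_\tau |E(G_\tau)| = O(m)$, this preprocessing runs in $O(P_3(n,m))$ time as required.

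Next, I would impose a top-tree-style balanced decomposition of depth $O(\log n)$ on $\mathcal{H}$ (working over the block-cut tree and the SPQR tree of each block separately and combining them in the obvious way). Each cluster $\mathcal{C}$ in the top tree represents a connected subgraph of $G$ touching the rest of the graph only through a constant number of boundary vertices; it stores a compressed connectivity summary between its boundary vertices that respects the currently failed set $F$. To update a cluster from its children, I would invoke the triconnected oracles at the leaves inside $\mathcal{C}$, feeding the child summaries as auxiliary virtual edges; the cost per cluster update is $O(U_3^e + Q_3)$ for the triconnected calls plus $O(\log n / \log\log n)$ for a predecessor/marked-ancestor query (via fusion trees) that locates which branch of the cluster contains a given vertex or virtual edge.

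For the dynamic operations: an edge failure or recovery affects a single triconnected component at a leaf of $\mathcal{H}$, then propagates through the $O(\log n)$ ancestor clusters in the top tree, paying $O(U_3^e + Q_3 + \log n /\log\log n)$ per level. A vertex failure or recovery is handled by one call to $U_3^v$ at the triconnected leaf that ``owns'' $v$ together with constantly many virtual-edge toggles in neighboring leaves, each treated as an edge update and propagated up; the per-level cost is again $O(U_3^e + Q_3 + \log n / \log\log n)$, giving the claimed $U^v$ and $U^e$ bounds. A connectivity query between $u$ and $v$ is answered by locating the unique $u$-$v$ path in the top tree and aggregating cluster summaries along its $O(\log n)$ clusters, paying $O(Q_3 + \log n/\log\log n)$ per cluster.

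The main obstacle, and where I expect the bulk of the technical work to lie, is the vertex-failure case: an original vertex can occur in many triconnected components through the SPQR tree, and we cannot afford to refresh each of those components separately. The resolution is to designate a single ``owner'' leaf per vertex (storing the true failed/alive state there) and treat every other occurrence as a virtual copy whose interaction with the cluster summary is mediated purely by virtual-edge toggles. Proving that these toggles, combined with the cluster summaries, correctly reflect connectivity in $G \setminus F$ reduces to checking, at each top-tree merge, that the $\mathcal{A}_3$ oracle for the leaf components produces the correct boundary-to-boundary connectivity relation --- which follows because a path in $G\setminus F$ decomposes uniquely into subpaths inside triconnected components separated by shared boundary vertices, and these subpaths are exactly what the oracle (and the virtual-edge replacements derived from child summaries) certifies.
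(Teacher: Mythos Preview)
Your high-level plan---SPQR decomposition plus a balanced hierarchy of depth $O(\log n)$---is in the right spirit, but the proposal has a genuine gap at the point you yourself flag as the main obstacle, and your proposed resolution does not actually close it.

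The cascade you must control is not ``a vertex occurs in many leaves'' but rather ``a change at one leaf invalidates a virtual edge in its parent's model graph, which may in turn invalidate the parent's virtual edge in the grandparent, and so on up the entire SPQR tree''. A single failure at $\phi(z)$ can therefore force $\mathcal{A}_3$ edge-updates at every ancestor of $\phi(z)$ in $\mathcal{T}_3$, which is $\Omega(n)$ in the worst case. Saying that you ``propagate through the $O(\log n)$ ancestor clusters in the top tree'' conflates the top-tree hierarchy with the SPQR tree: the clusters are $O(\log n)$ deep, but inside each cluster the virtual-edge cascade through its constituent SPQR nodes is not bounded, and you give no mechanism to short-circuit it. Relatedly, your claim that every cluster ``touches the rest of the graph only through a constant number of boundary vertices'' fails already at the leaves: an SPQR node of degree $k$ in $\mathcal{T}_3$ shares $2k$ graph vertices with its neighbours, so a constant-size summary is not available without further ideas.

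The paper attacks exactly this cascade with three coupled devices you do not have. First, it uses \emph{heavy-path} decomposition of $\mathcal{T}_3$ rather than a generic top tree. Second---and this is the key trick---each node $u$ maintains \emph{twin} oracles $\mathcal{A}_u^0$ and $\mathcal{A}_u^1$, one assuming the heavy child's subtree is entirely failed and one assuming it is entirely alive; this makes $u$'s oracle state indifferent to anything that happens below it on the same heavy path, so a failure changes at most one oracle per heavy path, hence $O(\log n)$ oracles total. Third, to determine which of the two oracles is correct and to aggregate pole-to-pole connectivity along a heavy path, the paper maintains a dynamic product of $2\times 2$ boolean matrices with word-packing; this is the actual source of the $\log n/\log\log n$ term, not a marked-ancestor or fusion-tree predecessor query as you suggest. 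Your ``owner leaf plus constantly many virtual-edge toggles'' is not a substitute for the twin-oracle idea, and without it the claimed $U^e$, $U^v$, and $Q$ bounds do not follow.
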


The proof of Theorem~\ref{thm:reduction-to-tri} requires two steps: a reduction from connected to biconnected graphs
and a reduction from biconnected to triconnected graphs.  These are given in Sections~\ref{sect:gen-to-bi} and~\ref{sect:bi-to-tri}.
Theorem~\ref{thm:reduction-to-tri-fully-dynamic} is proved in Section~\ref{sect:reduction-to-tri-fully-dyn}.
Throughout this section we assume for simplicity that $F\subseteq V(G)$ consists solely of vertices.

\subsection{Reduction to Biconnected Graphs in the $d$-Failure Model}\label{sect:gen-to-bi}

Let $P_2,U_2,$ and $Q_2$ be the preprocessing, update, and query times for an oracle for biconnected graphs.
Let $A\subseteq V(G)$ be the set of articulation points (vertices
whose removal disconnects $G$) and $\mathcal{C}$ be the set of
biconnected components in $G$.  The {\em component tree}
$\mathcal{T}_2$ on the vertex set $A\cup\mathcal{C}$ consists of edges
connecting articulation points to their incident biconnected
components.\footnote{The subscript 2 refers to biconnectivity.}  Root
$\mathcal{T}_2$ at an arbitrary $A$-node and let $\rep(C)\in A$ be the
parent of $C\in\mathcal{C}$ in $\mathcal{T}_2$.  The function $\phi$
maps graph vertices into their corresponding $\mathcal{T}_2$ nodes,
that is, if $u\in A$, $\phi(u) = u$; otherwise, $\phi(u) = C$ where
$C$ is the biconnected component containing $u$.

For any $u\in V(G)\backslash F$ let $\rep^{**}(u)$ be the {\em most} ancestral $A$-node reachable from $u$ in $G\backslash F$,
or $u$ if there is no such $A$-node, that is, if $u\in C \in \mathcal{C}$ is disconnected from $\rep(C)$.
The goal of the preprocessing algorithm is to build a structure that lets us evaluate $\rep^{**}$.

\paragraph{Processing Vertex Failures}
First, mark {\em red} every articulation point in $F\cap A$ and {\em blue} every component $C$ for which $|V(C)\backslash\{\rep(C)\} \cap F| \neq \emptyset$.  This takes $O(d)$ time.  
For a $\mathcal{T}_2$-node $u$, let $\rep^*(u)$ be the most distant $A$-node ancestor of $u$ reachable by a path of uncolored vertices.  By Lemma~\ref{Lem:marked-ancestor} we can evaluate $\rep^*(u)$ in $\pred(d,n)$ time with $\sort(d,n)$ preprocessing.  (Note that $\rep^*(u)$ does not exist if $u\in A$ is red or if $u\in \mathcal{C}$ and $\rep(u)$ is red.)  

In $O(U_2(d,n))$ time preprocess each blue component $C$ to support connectivity queries in $V(C)\backslash F$.
If $\rep(C)$ is uncolored, let $\hat C$ be the parent of $\rep^*(C)$ in $\mathcal{T}_2$.
(It may be that $\rep(C) = \rep^*(C)$.)
Determine if $\rep^*(C)$ and $\rep(\hat C)$ are still connected in $V(\hat C)\backslash F$ via one connectivity query and, if not, 
mark {\em red} the edge $(\rep^*(C),\hat C)$.  A red edge indicates that $\rep^*(C)$ is disconnected from $\rep(\hat C)$ but possibly still connected to other vertices in $\hat C$.  It takes $O(d\cdot Q_2(d,n))$ time to find all such red edges.
For $u\in A\cup\mathcal{C}$ let $\rep^{**}(u)$ be the most distant $A$-node ancestor of $u$ reachable by a path of uncolored vertices and edges.  
Using a preorder traversal of the blue nodes, compute $\rep^{**}(\rep(C))$ for all blue $C$ in $O(d)$ time.

\paragraph{Connectivity Queries}

To evaluate $\rep^{**}(x)$ and $\rep^{**}(x')$ we require two intra-component connectivity queries and two $\rep^*$ queries.
To determine if $x$ and $x'$ are connected we may require an additional intra-component connectivity query,
for a total time of $O(Q_2(d,n) + \pred(d,n))$.

Let $x_1 = \phi(x)$ and $x_1'=\phi(x')$ be the corresponding nodes in $\mathcal{T}_2$.  
Let $x_2 = \rep^*(x_1)$ and, if $\rep^*(x_1)$ exists, let $C_2$ be the parent of $x_2$.  
If $x_2$ does not exist let $x_3 = x$ and let $C_3$ be $x$'s component.
If $x_2$ exists and is connected to $\rep(C_2)$ let $x_3 = \rep^{**}(\rep(C_2))$, 
otherwise let $x_3 = x_2$; in both cases let $C_3$ be the parent of $x_3$, if $x_3$ is not already the root.
It follows that $x_3$ is a graph node, not a component, that $x$ is connected to $x_3$ in $G\backslash F$, 
and that $x_3$ is disconnected from $\rep(C_3)$.
The nodes $C_1',x_2',C_2',x_3',$ and $C_3'$ are defined similarly.  
It follows that we must return {\em connected} if $x_3 = x_3'$,
{\em disconnected} if $C_3 \neq C_3'$,
and {\em connected} if $x_3$ and $x_3'$ are connected in $C_3 = C_3'$.
In total there are up to three intra-component connectivity queries, on the pairs $(x_2,\rep(C_2)), (x_2',\rep(C_2')),$ and $(x_3,x_3')$.
There are two necessary queries to $\rep^*(x_1)$ and $\rep^*(x_1')$.  The nodes $\rep^{**}(\rep(C_2))$ and $\rep^{**}(\rep(C_2'))$ were computed when processing $F$.

\begin{figure}
\centering
\begin{tabular}{c@{\hspace{.8cm}}c@{\hspace{.8cm}}c}
\scalebox{1}{\includegraphics{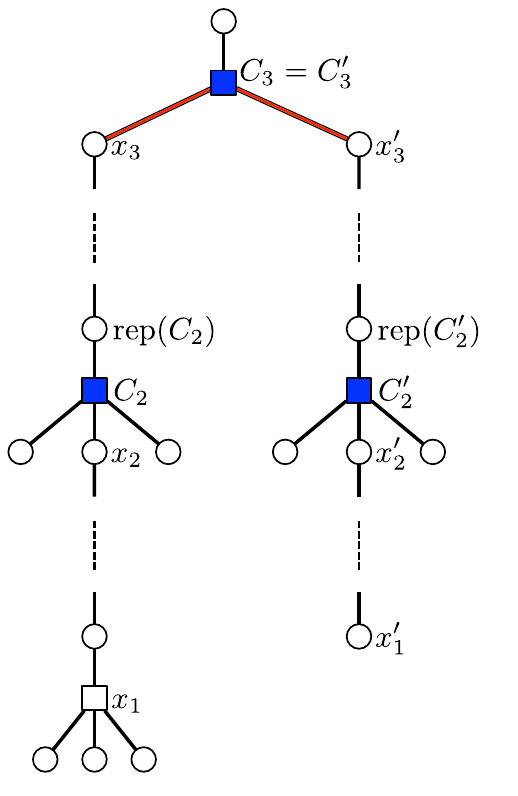}}
&
\scalebox{1}{\includegraphics{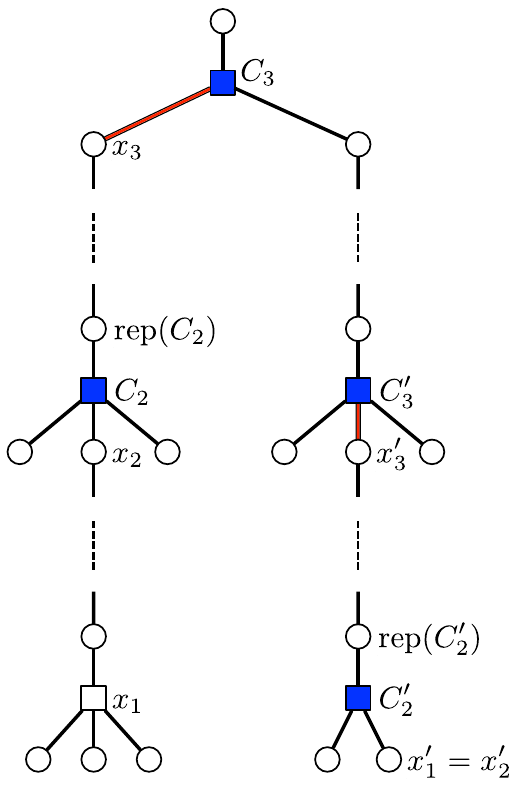}}
&
\scalebox{1}{\includegraphics{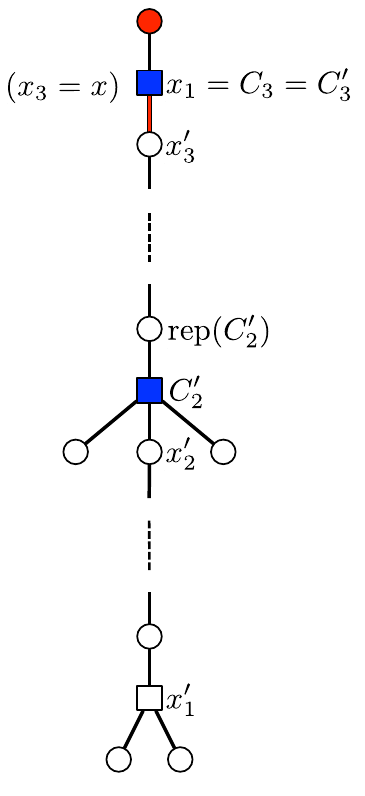}}\\
(a) & (b) & (c)
\end{tabular}
\caption{\label{fig:gen-to-bi}Three examples of queries.
Square nodes represent biconnected components, round nodes are articulation points.  
A round node shaded red is a failed articulation point.  A square node shaded blue is a component with at least 1 failure.  An edge $(z,C)$ shaded red indicates that $z$ is disconnected from $\rep(C)$, but not necessarily from other nodes in $C$.}
\end{figure}

\subsection{Reduction to Triconnected Graphs in the $d$-Failure Model}\label{sect:bi-to-tri}

An $SPQR$ tree $\mathcal{T}_3$ represents how a biconnected graph $G$ is assembled form its triconnected components through {\em merging} operations.  All separating pairs of $G$ are given implicitly by $\mathcal{T}_3$.  Let $H_1$ and $H_2$ be two edge-labeled 
multigraphs, where $V(H_1) \cap V(H_2) = \{x,y\}$ and where $e=(x,y)$ appears in both $E(H_1)$ and $E(H_2)$ with the same label.  
Then {\em merging} $H_1$ and $H_2$ results in a new graph $H_3 = (V(H_1)\cup V(H_2), E(H_1) \cup E(H_2) \backslash\{e\})$.  
That is, $e$ does not appear in the merged graph but $x$ and $y$ may still be adjacent in $H_3$ if there were multiple edges between $x$ and $y$ in either $H_1$ or $H_2$.

Each node $u\in \mathcal{T}_3$ is associated with a triple $\ang{M_u,G_u,\pi(u)}$,
where $M_u$ is the {\em model} graph, one edge $\pi(u)\in E(M_u)$ is distinguished as the {\em polar edge} whose endpoints are {\em poles}.  
Let $p(u)$ be the parent of $u$ in $\mathcal{T}_3$ and $\mathcal{T}_3(u)$ the subtree rooted at $u$.  
The graphs $M_u$ and $M_{p(u)}$ have exactly two vertices in common, namely the endpoints of $\pi(u)$, and each contains identically labeled copies of $\pi(u)$.  Moreover, for any two $\mathcal{T}_3$-nodes on opposite sides of the edge $(u,p(u))$, their model trees do not intersect, except possibly at the poles of $\pi(u)$.
The graph $G_u$ is formed by merging all model trees in $\mathcal{T}_3(u)$.  That is, $u$ 
has children $u_1,\ldots,u_k$ where $V(G_{u_i}) \supseteq V(M_{u_i})$ and $G_{u_i}$ contains the polar edge $\pi(u_i)$.
Obtain $G_u$ by merging $\{G_{u_i}\}$ with $M_u$ along the polar edges $\{\pi(u_i)\}$.  (Observe that the polar edge $\pi(u) \in E(M_u)$ remains in $E(G_u)$.)
Let $G_u^-$ be $G_u$ without its polar edge $\pi(u)$.

The tree $\mathcal{T}_3$ is conceptually constructed in a top-down fashion as follows.  (Linear time triconnectivity algorithms~\cite{HopcroftT73,MillerR92,GutwengerM00} can be used to construct $\mathcal{T}_3$.)  The root $r$ is associated with $G_r = G$ and an arbitrary polar edge $\pi(r) \in E(G)$.  
In general, we are given a $\mathcal{T}_3$-node $u$ and pair $\ang{G_u,\pi(u) \in E(G_u)}$.  
We must select a model multigraph $M_u$ and then recursively construct the subtree $\mathcal{T}_3(u)$.
The reader may want to refer to an illustration of a graph and its $SPQR$-tree in Figure~\ref{fig:spqr}.

\begin{figure}
\centering
\begin{tabular}{c}
\scalebox{.8}{\includegraphics{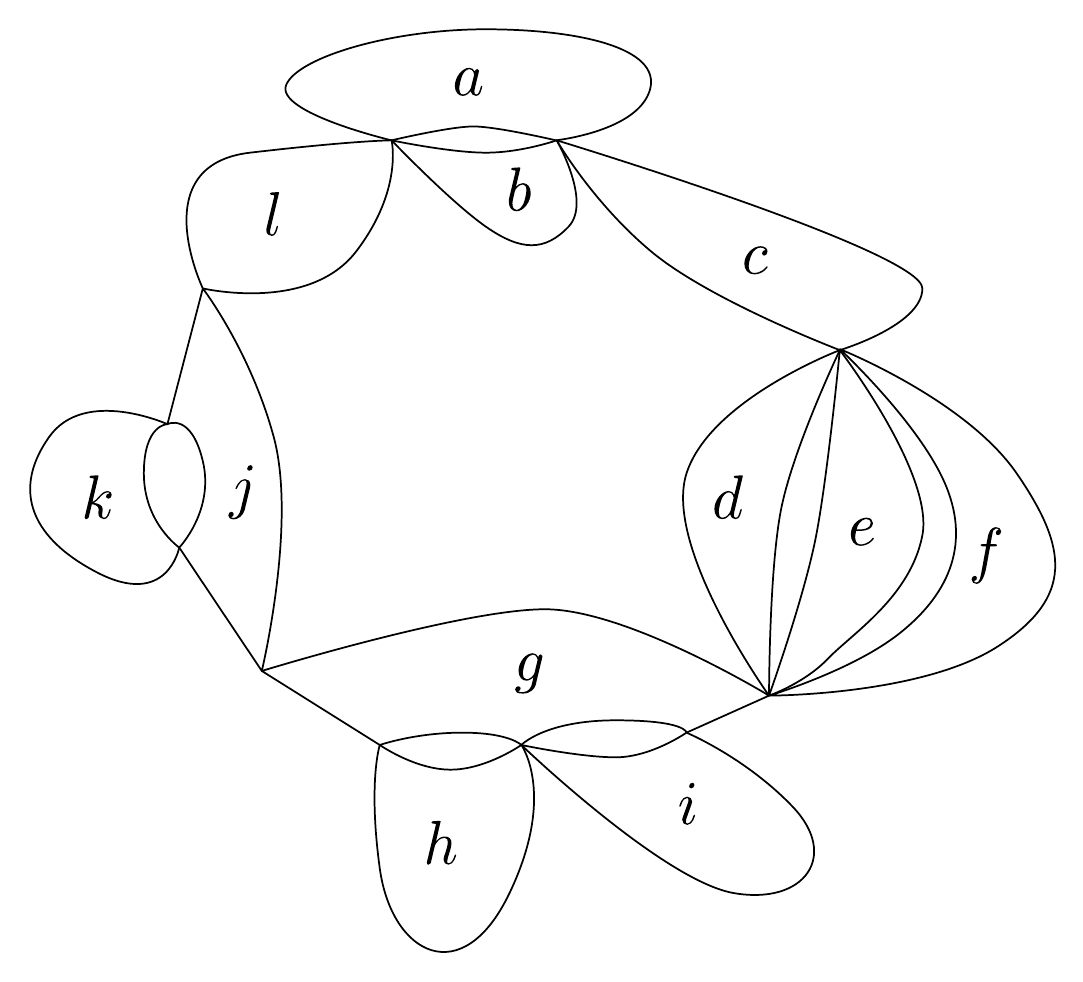}}
\\
\\
\scalebox{1.2}{\includegraphics{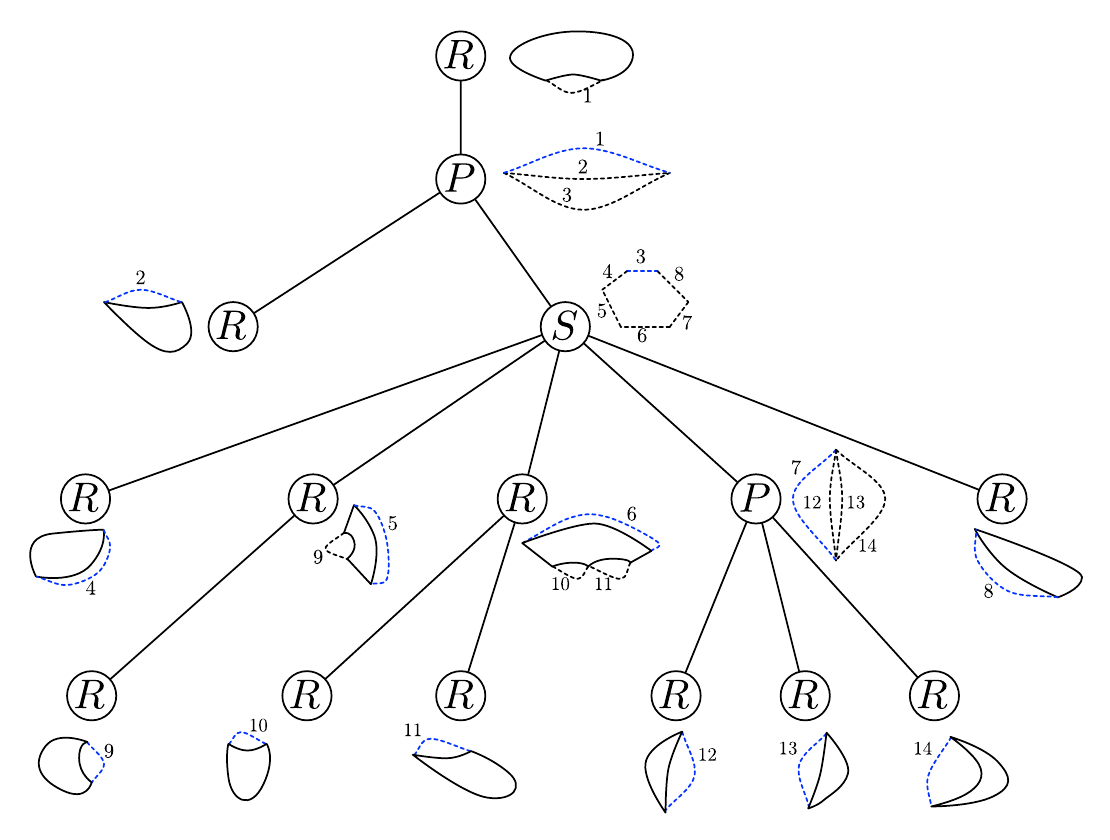}}
\end{tabular}
\caption{\label{fig:spqr}Above: a schematic of a biconnected graph; vertices internal to subgraphs $a,\ldots,l$ are not shown.
Below: The corresponding $SPQR$-tree $\mathcal{T}_3$, with mergeable edges labeled.  Adjacent to each $u\in \mathcal{T}_3$ is its model graph $M_u$.}
\end{figure}

\paragraph{Parallel Case ($P$-node).}
The endpoints of $\pi(u) = (\pi_0,\pi_1)$ split $E(G_u)$ into $k+1$ components $\pi(u), G_{u_1}^-,\ldots,G_{u_k}^-$ where $k\ge 2$.
Let $G_{u_i}$ be $G_{u_i}^-$ augmented with a freshly labeled polar edge $\pi(u_i) = (\pi_0,\pi_1)$ and let $M_u$ consist of $k+1$ parallel edges $\{\pi(u),\pi(u_1),\ldots,\pi(u_k)\}$.  Clearly merging $\{G_{u_i}\}$ with $M_u$ yields $G_u$.  For each nontrivial $G_{u_i}^-$ (nontrivial means having more than one edge), create a new child $u_i$ of $u$ and recursively compute its subtree in $\mathcal{T}_3$.\footnote{In previous work on $SPQR$ trees~\cite{DiBattistaT96} trivial graphs are actually assigned to children of $u$.  They are called $Q$-nodes.}

\paragraph{Series Case ($S$-node).}
Let $\pi(u) = (\pi_0,\pi_1)$ and let $\pi_2,\ldots,\pi_k$ be the articulation points of $G_u^-$.  If there are any such points ($k\ge 2$)
let $\pi(u),G_{u_1}^-,\ldots,G_{u_k}^-$ be the partition of $G_u$ into edge-disjoint subgraphs such that $G_{u_i}^-$ and $G_{u_{i+1}}^-$ meet at $\pi_{i+1}$.  Let $M_u$ be the cycle $(\pi_0,\ldots,\pi_k,\pi_0)$ and let $G_{u_i}$ be $G_{u_i}^-$ augmented with a polar edge $(\pi_i,\pi_{i+1})$ (where $\pi_{k+1} = \pi_0$), whose label matches the corresponding edge in $M_{u_i}$.  For each non-trivial $G_{u_i}^-$ form a new child $u_i$ of $u$ and recursively compute its subtree in $\mathcal{T}_3$.

\paragraph{Rigid Case ($R$-node).}
The previous two cases do not apply.  A separating pair $\alpha = \{\pi_0',\pi_1'\}$ of $G_u$ splits it into some number of components $H_1,\ldots,H_l$ with $\pi(u) \in E(H_1)$.  Let $H(\alpha) = H_1$ and $H'(\alpha)$ be the union of $H_2,\ldots,H_l$.  Call $\alpha$ {\em maximal} if there is no separating pair $\beta$ for which $H(\beta)$ is strictly contained in $H(\alpha)$.
Let $\{\pi(u_i)\}_{1\le i \le k}$ be the set of $k$ maximal separating pairs with respect to $\pi(u)$.
We give $u$ children $u_1,\ldots,u_k$ where $u_i$ is given the graph $G_{u_i} = H'(\pi(u_i))$ and polar edge $\pi(u_i)$.
The model graph $M_u$ is induced by $E(M_u) = E(H(\pi(u_i))) \cap \cdots \cap E(H(\pi(u_k))) \cup \{\pi(u_1),\ldots,\pi(u_k)\}$, that is,
it is obtained from $G_u$ be replacing $H'(\pi(u_i))$ with the $u_i$'s polar edge $\pi(u_i)$, for all $i$.
The maximality of the separating pairs ensures that $M_{u}$ is triconnected, also called {\em rigid}.

\begin{Obs}\label{obs:PSnodes}
In $\mathcal{T}_3$ there are no adjacent $P$-nodes nor adjacent $S$-nodes.
\end{Obs}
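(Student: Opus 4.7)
The plan is to prove the two non-adjacency claims by contradiction, leveraging the fact that the $P$- and $S$-constructions at a node $u$ produce the maximal such decomposition of $G_u$, so the same rule cannot fire again at a child $u_i$. The argument is local: in each case the structure that would make $u_i$ a second $P$- or $S$-node would already have been unpacked one level higher.

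I would start with the $P$-case. Suppose $u$ is a $P$-node with a child $u_i$ that is also a $P$-node. By the $P$-construction at $u$, the subgraphs $\pi(u), G_{u_1}^-,\ldots,G_{u_k}^-$ are the $k+1$ pieces into which $\{\pi_0,\pi_1\}$ splits $E(G_u)$; in particular $G_{u_i}^-$ is a single such piece, so $G_{u_i}^- \setminus \{\pi_0,\pi_1\}$ is connected. Now $G_{u_i} = G_{u_i}^- \cup \{\pi(u_i)\}$ with $\pi(u_i) = (\pi_0,\pi_1)$, so deleting $\pi_0,\pi_1$ from $G_{u_i}$ removes $\pi(u_i)$ and leaves only the single component $G_{u_i}^- \setminus \{\pi_0,\pi_1\}$. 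Hence $\{\pi_0,\pi_1\}$ splits $E(G_{u_i})$ into just two parts, $\{\pi(u_i)\}$ and $G_{u_i}^-$, whereas the $P$-rule requires at least three ($k\geq 2$). So $u_i$ cannot be a $P$-node, contradiction.

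Next I would handle the $S$-case. Suppose $u$ is an $S$-node with a child $u_i$ that is also an $S$-node. By the $S$-construction at $u$, the graph $G_u^-$ is an open chain $\pi_1 - G_{u_1}^- - \pi_2 - \cdots - \pi_k - G_{u_k}^- - \pi_0$ whose articulation points (other than the poles) are exactly $\pi_2,\ldots,\pi_k$, and $G_{u_i}^-$ is the link of the chain between $\pi_i$ and $\pi_{i+1}$. If $u_i$ is also an $S$-node then $G_{u_i}^-$ has some articulation point $v$ with $v\notin\{\pi_i,\pi_{i+1}\}$. I claim $v$ is then an articulation point of $G_u^-$ as well: removing $v$ breaks $G_{u_i}^-$ into components; since $G_{u_i}^-$ meets the rest of $G_u^-$ only at $\pi_i$ and $\pi_{i+1}$, either some component contains neither pole and is therefore cut off from the rest of $G_u^-$, or $\pi_i$ and $\pi_{i+1}$ lie in different components -- and because the polar edge has been removed, the remaining pieces form an \emph{open} chain from $\pi_1$ to $\pi_0$, offering no alternate $\pi_i$-to-$\pi_{i+1}$ path around the chain. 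In either case $G_u^- - v$ is disconnected, so $v$ is an articulation point of $G_u^-$. Since $v$ is interior to $G_{u_i}^-$ it cannot equal any $\pi_j$, contradicting the fact that $\pi_2,\ldots,\pi_k$ exhausts the articulation points of $G_u^-$.

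The main obstacle is the $S$-case: one must use the ``chain not cycle'' property of $G_u^-$, for if the polar edge had not been removed there would be a detour around the chain and the interior cut $v$ would no longer be an articulation point. The $P$-case is almost immediate from the uniqueness of the $\{\pi_0,\pi_1\}$-splitting. In both cases the common theme is that the parent rule greedily absorbs everything it can into each child, leaving the child without enough remaining structure to support a second invocation of the same rule.
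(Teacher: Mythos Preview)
Your argument is correct. The paper states this as an Observation and gives no proof; it is a standard structural fact about $SPQR$ trees (cf.\ Hopcroft--Tarjan and Di~Battista--Tamassia). Your direct verification---that each $G_{u_i}^-$ produced by a $P$-split is already a single $\{\pi_0,\pi_1\}$-bridge so the $P$-rule cannot fire again, and that an articulation point interior to a link $G_{u_i}^-$ of an $S$-chain would already be an articulation point of $G_u^-$ because the chain is open once $\pi(u)$ is removed---is exactly the intended justification. One minor remark: in the $S$-case your parenthetical ``(other than the poles)'' is unnecessary, since $G_u$ with its polar edge is biconnected and hence $\pi_0,\pi_1$ are never articulation points of $G_u^-$; this does not affect the argument.
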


The edges in a model graph $M_u$ that do not appear in $G$ are {\em artificial} and represent paths in $G$.  
Our goal is to reduce a connectivity query in $G$ to a constant number of connectivity queries in model graphs, so it is important that we identify the set of invalidated artificial edges.  

\paragraph{Preprocessing Vertex Failures}

Let $\phi(z)$, where $z\in V(G)$, be the {\em most} ancestral $\mathcal{T}_3$-node $u$ for which $z\in V(M_u)$ and let $\phi^{-1}(u) = \{z \::\: \phi(z) = u\}$.
(Note that $z$ can appear in an unbounded number of model trees as a pole.)  We begin by coloring {\em blue} those $\mathcal{T}_3$-nodes $u$ for which the endpoints of $\pi(u)$ may be disconnected in $G_u^- \backslash F$.  Specifically, repeatedly color blue any $u$ satisfying one of the following criteria:

\begin{enumerate}
\item $u = \phi(z)$ for some $z\in F$.
\item $u$ is an $S$-node with at least one blue child.  ($M_u\backslash{\pi(u)}$ is a path, so the removal of one vertex or edge  disconnects $u$'s poles.)
\item $u$ is either an $R$- or $P$-node with at least two blue children.  (There are at least two internally vertex-disjoint paths
between the endpoints of $\pi(u)$ in $M_u\backslash{\pi(u)}$, so the removal of just one vertex or edge cannot disconnect $u$'s poles.)
\end{enumerate}

\begin{Lem}
There are fewer than $4d$ blue nodes.  They can be found in $O(d)$ time.
\end{Lem}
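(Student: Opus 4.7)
The plan has two parts: a counting bound $|B|<4d$ and an $O(d)$-time construction.

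For the count, I would partition $B$ into three disjoint sets $B_1, B_2, B_3$ according to which of the three criteria first applies to each blue node (so $B_1$ is the rule-1 set, $B_2$ is the set of $S$-nodes newly colored by rule~2, and $B_3$ the set of $R$/$P$-nodes newly colored by rule~3). Since $B_1 = \{\phi(z) : z \in F\}$ and $\phi$ is a function, $|B_1| \leq d$. I would then count the $\mathcal{T}_3$-edges $(u,v)$ with $u$ the $\mathcal{T}_3$-parent of $v$ and both endpoints blue. By criterion~2, each $u \in B_2$ contributes at least one such outgoing edge, and by criterion~3, each $u \in B_3$ contributes at least two, so the number of blue-to-blue parent-child edges is at least $|B_2| + 2|B_3|$. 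On the other hand each blue child has at most one $\mathcal{T}_3$-parent, and the topmost blue node of any blue connected component has no blue parent, so the number of such edges is at most $|B|-1$. Combining yields $|B_3| \leq |B_1|-1 \leq d-1$.

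The bound on $|B_2|$ is the step that critically uses Observation~\ref{obs:PSnodes}. Since every $u \in B_2$ is an $S$-node, its blue $\mathcal{T}_3$-children are not $S$-nodes and therefore lie in $B_1 \cup B_3$. Every outgoing edge from $B_2$ thus enters $B_1 \cup B_3$, whose total in-degree is at most $|B_1|+|B_3|$, giving $|B_2| \leq |B_1|+|B_3| \leq 2d-1$. Summing the three bounds yields $|B| \leq d + (2d-1) + (d-1) = 4d-2 < 4d$.

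For the algorithmic part I assume $O(n)$-time preprocessing that stores, for every $z \in V(G)$, the identity of $\phi(z)$, and for every $u \in \mathcal{T}_3$, its type ($S$, $P$, or $R$) and its parent pointer. The search computes the seed set $\{\phi(z) : z \in F\}$ in $O(d)$ time and places it in a queue of newly blue nodes. Whenever a blue node $v$ is dequeued, I would increment an ephemeral counter $c(u)$ at $u = \mathrm{parent}(v)$; if $u$ is not yet blue and its type-specific threshold ($1$ for $S$-nodes, $2$ for $P$/$R$-nodes) is met, mark $u$ blue and enqueue it. The counters and markers live in a small hash table (or a list that can be rewound), so they can be reset in $O(|B|)$ time between updates. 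Since every node is touched only when it first becomes blue, the total work is $O(|B|) = O(d)$. The main obstacle is really the $|B_2|$ bound: without Observation~\ref{obs:PSnodes} an $S$-node could sit at the top of an arbitrarily long blue chain, and the linear-in-$d$ count would fail.
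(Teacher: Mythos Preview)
Your proof is correct and follows essentially the same approach as the paper: partition $B$ by which criterion applies, bound $|B_1|\le d$, $|B_3|\le d-1$, and use Observation~\ref{obs:PSnodes} to get $|B_2|\le |B_1|+|B_3|$, then propagate the coloring upward with per-parent counters. Your edge-counting derivation of $|B_3|\le |B_1|-1$ is more explicit than the paper's one-line assertion, and your queue-based propagation is just a BFS phrasing of the paper's per-$z$ upward walk (which records ``one blue child'' at the first uncolored ancestor), but the underlying ideas are identical.
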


\begin{proof}
There are clearly at most $d$ blue nodes due to criterion (1) and at most $d-1$ due to criterion (3).
Since, by Observation~\ref{obs:PSnodes}, there are no adjacent $S$-nodes, the number due to criterion (2)
is no more than the number due to criteria (1) and (3).  They can be found in $O(d)$ time as follows.  For each $z\in F$
color $\phi(z)$ blue and continue to color any immediate ancestors blue if they satisfy criteria (2) or (3).  The first ancestor not colored blue is noted as having one blue child.  The time per $F$-node is $O(1)$ plus the number of formerly uncolored ancestors colored blue.
\end{proof}

Next we color {\em purple} all blue nodes $u$ for which the poles of $\pi(u) = (\pi_0,\pi_1)$ are disconnected in $G_u^- \backslash F$ (that is, ignoring parts of the graph outside $u$'s subtree), as follows.  
Traverse the blue nodes in $\mathcal{T}_3$ in postorder.  If either of $\pi_0$ or $\pi_1$ is in $F$ then they are trivially disconnected, so color $u$ purple.  If $u$ is a $P$-node then $\pi_0$ and $\pi_1$ are still connected if they are connected in {\em any} child of $u$, so color $u$ purple if all its children are purple.  If $u$ is an $S$-node then $\pi_0$ and $\pi_1$ are disconnected if the poles of any child $u_i$ of $u$ are disconnected, so color $u$ purple if any child is purple.  Finally, if $u$ is an $R$-node, construct a connectivity oracle $\mathcal{A}_u^-$ for $M_u$ avoiding vertex failures $F\cap V(M_u)$ and edge failures $\{\pi(u)\} \cup \{\pi(u') \;:\; \mbox{$u'$ is a purple child of $u$}\}$.
(The edge failures reflect the fact that the polar edges of purple children are definitely invalidated by the failures and 
it is not yet known if $\pi(u)$ is invalidated.)  Perform one query on $\mathcal{A}_u^-$ to determine if $\pi_0,\pi_1$ are disconnected in $G_u^- \backslash F$ and, if so, color $u$ purple.

We also construct connectivity oracles for each colored $S$- and $P$-node $u$ in $G^-_u\backslash F$.
This is trivial if $u$ is a $P$-node. ($M_u$ contains two vertices, which are either connected or not.)  
If $u$ is an $S$-node then $M_u$ is a cycle and a connectivity query can be answered with predecessor search, in $\pred(d,n)$ time.
The time to construct oracles for all colored nodes is $O(U(d,n) + \sort(d,n) + d\cdot Q(d,n))$.

\begin{Lem}
In the construction of connectivity oracles $\{\mathcal{A}_u^-\}$, there are $O(d)$ vertex and edge failures {\em in total.}
\end{Lem}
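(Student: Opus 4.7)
I will bound the vertex- and edge-failure contributions separately, each by $O(d)$. The edge case is direct: for each colored $u$, the oracle $\mathcal{A}_u^-$ is built with the edge failure $\pi(u)$ together with $\pi(u')$ for every purple child $u'$ of $u$. The $\pi(u)$ terms contribute at most the number of colored nodes, which is $O(d)$ by the previous lemma. For the second summand, every purple node has a unique parent, so the total number of ``purple child'' occurrences over all colored $u$ is at most the number of purple (and hence colored) nodes, again $O(d)$.

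The vertex case is where the argument is less obvious, since a single failed vertex $z$ may appear as a pole in the model graphs of many $\mathcal{T}_3$-nodes along a path, so a naive per-node bound would give $O(d^2)$. I would instead bound $|X|$ where $X = \{(z,u) : z\in F,\, u\text{ colored},\, z\in V(M_u)\}$, which exactly equals $\sum_{u\text{ colored}} |F\cap V(M_u)|$. For fixed $z$, let $S_z = \{u\in\mathcal{T}_3 : z\in V(M_u)\}$. A standard property of $SPQR$-trees is that $S_z$ is a connected subtree of $\mathcal{T}_3$ rooted at $\phi(z)$, and that adjacent model graphs intersect only at the two poles of their shared polar edge, i.e.\ $V(M_u)\cap V(M_{p(u)}) = \{\text{poles of }\pi(u)\}$.

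Now set $T_z = S_z \cap (\text{colored nodes})$. Since $\phi(z)$ is colored by criterion~(1), $\phi(z)\in T_z$, so $|X| = \sum_{z\in F}|T_z| \le d + \sum_{z\in F} |T_z\setminus\{\phi(z)\}|$. For any $u\in T_z\setminus\{\phi(z)\}$, connectivity of $S_z$ forces $p(u)\in S_z$, which by the pole-intersection property forces $z$ to be one of the (at most) two poles of $\pi(u)$. Swapping the order of summation, each colored $u$ is paired with at most the number of failed poles of $\pi(u)$, namely at most $2$. Hence $|X| \le d + 2\cdot|\text{colored nodes}| = O(d)$, completing the bound. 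The main obstacle is precisely recognizing the subtree structure of $S_z$ and exploiting the fact that every step of $z$ from a node to its parent in $\mathcal{T}_3$ must pass through one of only two pole slots of the traversed polar edge, which is what turns the naive $O(d^2)$ charging into $O(d)$.
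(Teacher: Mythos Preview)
Your proof is correct and follows essentially the same approach as the paper's. The paper's vertex-failure argument is the one-liner that $F\cap V(M_u)$ can exceed $\{z\in F:\phi(z)=u\}$ by at most the two poles of $\pi(u)$, so summing over colored $u$ gives at most $d + 2\cdot 4d$; your subtree analysis of $S_z$ reaches the identical bound after swapping the order of summation, with the detour through connectivity of $S_z$ being more than is strictly needed (it suffices that $M_u$ and $M_{\phi(z)}$ lie on opposite sides of the edge $(u,p(u))$ and hence meet only at the poles of $\pi(u)$).
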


\begin{proof}
Each colored node $u$ can contribute one edge failure to $\mathcal{A}_{p(u)}^-$, namely the polar edge $\pi(u)$.
Note that $F\cap V(M_u)$ can include at most two additional vertices than $\{z \in F \;:\; \phi(z) = u\}$, namely the poles
of $\pi(u)$.  Thus, the total number of failures is linear in $d$ and the number of colored nodes, which are fewer than $4d$.
\end{proof}

Let $\rep^{**}(z)$ be the farthest ancestral pole reachable from $z\in V(G)$, or $z$ if there is no such pole.
Specifically, let $u\in \mathcal{T}_3$ be the farthest ancestor of $\phi(z)$ such that $z$ is connected to either pole
in $\pi(u)$ in $G_u^-\backslash F$.  If there is no such $u$ ($z$ is disconnected from $\pi(\phi(z))$) then $\rep^{**}(z) = z$
and if $z$ is connected to both poles $\pi(u)$ then $\rep^{**}(z)$ can be either one.
Our goal is to preprocess $F$ so that we can quickly evaluate $\rep^{**}$.

Let $\rep(z)$ be either pole in $\pi(\phi(z))$ connected to $z$ in $G_{\phi(z)}^-$, or undefined if there is no such pole.
Let $\rep^*(z)$ be the farthest ancestral pole reachable from $z$ via uncolored $\mathcal{T}_3$ nodes, defined precisely as follows.  If $\phi(z)=u$ and $u$ is colored then $\rep^*(z)$ is undefined.  If $u$ is uncolored and $z$ is disconnected from both of $u$'s poles (this can only happen if they are in the failure set) then $\rep^*(z)$ is undefined.
Otherwise let $\hat u$ be the farthest ancestor of $u$ such that the $\mathcal{T}_3$-path from $u$ to $\hat u$ is uncolored and at least one pole of $\hat u$ is not in $F$.  Then $\rep^*(z)$ is any non-$F$ pole of $\hat u$.  By Lemma~\ref{Lem:marked-ancestor}, $\mathcal{T}_3$
can be preprocessed in $\sort(d,n)$ time to support $\rep^*$ queries in $\pred(d,n)$ time.

We calculate $\rep^{**}$ on each pole $z$ of each colored $\mathcal{T}_3$-node $u$ by a preorder traversal of such nodes,
as follows.
\begin{enumerate}
\item If $z\in F$ then $\rep^{**}(z)$ is undefined.
\item If $u$ is the root of $\mathcal{T}_3$ or is a colored node without a colored ancestor then $\rep^{**}(z)$ is a pole of the root of $\mathcal{T}_3$.
\item If $\phi(z)$ is colored then let $\rep^{**}(z) = z$ if $\rep(z)$ is undefined (it is disconnected from both poles of $\phi(z)$)
and $\rep^{**}(\rep(z))$ otherwise.

\item If $\phi(z)$ is uncolored then let $\rep^{**}(z) = z$ if $\rep^*(z)$ is undefined (it is disconnected from both poles of $\phi(z)$)
or $\rep^*(z)$ if $\rep(\rep^*(z))$ is undefined ($\rep^*(z)$ is disconnected from both poles of $\phi(\rep^*(z))$)
or $\rep^{**}(\rep(\rep^*(z)))$ otherwise.
\end{enumerate}

Note that we calculate $\rep^{**}$ explicitly on only $O(d)$ vertices, which takes 
$O(d\cdot \max\{Q_3(d,n), \pred(d,n)\})$ time.

\paragraph{Connectivity Queries}

The query asks whether $x$ and $x'$ are connected in $G\backslash F$, where $x,x' \not\in F$.
Our goal is to reduce this to $O(1)$ connectivity queries to the oracles associated with colored $\mathcal{T}_3$-nodes.

Let $x_1 = \rep^*(x)$ if $\phi(x)$ is uncolored and $x$ if $\phi(x)$ is colored.
It follows that $\phi(x_1) = u_1$ is colored.
Using two connectivity queries on $u_1$'s oracle 
(between $x_1$ and the poles of $u_1$) determine $x_2 = \rep(x_1)$. 
If $x_2$ is defined let $x_3 = \rep^{**}(x_2)$;
otherwise let $x_3 = x_1$.
It follows directly from the definitions of $\rep,\rep^*,$ and $\rep^{**}$ that $x_3= \rep^{**}(x)$ is the most ancestral pole connected to $x$, or is $x$ if there is no such pole.
In any case let $u_3 = \phi(x_3)$ and define $x_1',x_2',x_3',u_3'$ analogously, with respect to $x'$.

\begin{Lem}\label{lem:}
Vertices $x$ and $x'$ are connected in $G\backslash F$ if and only 
if $u_3 = u_3'$ and $x_3$ and $x_3'$ are connected in $G\backslash F$.
It requires $O(Q_3(d,n) + \pred(d,n))$ time to answer a connectivity query.
\end{Lem}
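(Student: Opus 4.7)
The plan is to prove the iff in two directions and then tally the time. First I would establish the following invariant about the recursive computation of $\rep^{**}$: for every uncolored $\mathcal{T}_3$-node $u$ and every non-failed $z \in V(M_u)$, $z$ is connected in $G_u^-\setminus F$ to every non-failed pole of $\pi(u)$. This is a short induction on $\mathcal{T}_3$ driven by the $P$-, $S$-, $R$-node case analysis underlying the blue/purple coloring. As a consequence, $\rep^{**}(x)$ returns the most ancestral $\mathcal{T}_3$-pole reachable from $x$ in $G\setminus F$ (or $x$ itself if no such pole exists), and $u_3 = \phi(\rep^{**}(x))$ is always a colored node with $x_3 \in V(M_{u_3})$.

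For the forward direction of the iff, assume a path $P$ from $x$ to $x'$ in $G\setminus F$. The SPQR tree satisfies the structural invariant that the model graphs on opposite sides of a $\mathcal{T}_3$-edge $(u,p(u))$ meet only at the two poles of $\pi(u)$, so any walk crossing $u$'s subtree boundary must pass through one of those poles. It follows that the highest $\mathcal{T}_3$-node whose subtree contains all of $V(P)$ is a single node $u^*$, and both $x$ and $x'$ reach a non-failed pole of $\pi(u^*)$ along $P$ (or $u^* = \phi(x) = \phi(x')$ already). The invariant above then forces $\rep^{**}(x)$ and $\rep^{**}(x')$ to land at the same topmost colored ancestor $u_3 = u_3'$ with $x_3,x_3' \in V(M_{u_3})$, and $P$ itself witnesses that $x_3$ and $x_3'$ are connected in $G\setminus F$. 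The converse direction is immediate: concatenate the walks $x\leadsto x_3$, $x_3\leadsto x_3'$, and $x_3'\leadsto x'$, where the outer walks exist by definition of $\rep^{**}$ and the middle by hypothesis.

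For the query time, computing $x_1 = \rep^*(x)$ costs $O(\pred(d,n))$ by Lemma~\ref{Lem:marked-ancestor}; obtaining $x_2 = \rep(x_1)$ uses at most two queries on the precomputed oracle $\mathcal{A}_{u_1}^-$, each in $O(Q_3(d,n))$; and the lookup of $x_3 = \rep^{**}(x_2)$ is $O(1)$ since this value was calculated during preprocessing. The symmetric work for $x'$, the $O(1)$ comparison $u_3 \stackrel{?}{=} u_3'$, and one final query on $\mathcal{A}_{u_3}^-$ to decide whether $x_3$ and $x_3'$ are connected bring the total to $O(Q_3(d,n) + \pred(d,n))$.

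The main obstacle is the forward direction: demonstrating rigorously that two quantities computed unilaterally from $x$ and $x'$ (via the chain $\rep^* \to \rep \to \rep^{**}$) must coincide on $u_3$ whenever a bilateral path $P$ exists. The delicate point is tracking how $P$ threads through successive $P$-, $S$-, and $R$-node model graphs via polar edges whose validity is exactly what the purple coloring encodes (a purple child means its polar edge is invalidated because its poles are disconnected inside its own subgraph). Once this structural claim about $\mathcal{T}_3$ is isolated, the remainder of the argument is bookkeeping along the $\rep^{**}$-chains.
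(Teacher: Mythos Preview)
Your converse direction and your time accounting are fine and match the paper. The problem is the forward direction.

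Your plan hinges on the node $u^*$, ``the highest $\mathcal{T}_3$-node whose subtree contains all of $V(P)$'', and the claim that ``both $x$ and $x'$ reach a non-failed pole of $\pi(u^*)$ along $P$.'' That claim is not justified and is in general false: by construction $P$ lies entirely inside $G_{u^*}$, so nothing forces it to touch $\pi(u^*)$ at all. (Take $u^*$ an $R$-node with both poles in $F$; $x$ and $x'$ may well be connected inside $G_{u^*}^-\setminus F$ without either ever seeing a pole of $u^*$.) Consequently the subsequent sentence --- that the invariant ``forces $\rep^{**}(x)$ and $\rep^{**}(x')$ to land at the same topmost colored ancestor'' --- is left unsupported. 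Your auxiliary invariant about uncolored nodes is reasonable, but it does not by itself close this gap.

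The paper avoids the $u^*$ detour entirely and argues the contrapositive in two lines. The key observation, which you do not use, is built into the very definition of $\rep^{**}$: $x_3=\rep^{**}(x)$ is the \emph{most ancestral} pole reachable from $x$, so (unless $u_3$ is the root) $x_3$ is disconnected from both poles in $\pi(u_3)$. Since the only way out of $V(G_{u_3})$ is through those two poles, $x_3$ is disconnected from every vertex of $V(G)\setminus V(G_{u_3})$, and also from the poles $\pi(u_3)$ themselves. The same holds for $x_3'$ and $u_3'$. Now if $u_3\neq u_3'$, then $x_3'$ is either outside $V(G_{u_3})$ or is one of the two poles $\pi(u_3)$; in either case $x_3$ and $x_3'$ are disconnected, hence so are $x$ and $x'$. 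This is the whole forward argument --- no induction on node types, no path $P$, no $u^*$. I recommend you replace your forward direction with this; once you see that ``most ancestral reachable pole'' immediately gives ``disconnected from $\pi(u_3)$'', the iff becomes a one-paragraph proof.
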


\begin{proof}
Note that by definition of $\rep,\rep^*,$ and $\rep^{**}$, $x$ is connected to $x_1$,
$x_1$ is connected to $x_2$ (if it exists), and $x_2$ is connected to $x_3$, hence
$x$ is connected to $x_3$.
Thus, $x$ and $x'$ are connected if and only if $x_3$ and $x_3'$ are connected,
which is determined by $u_3$'s oracle if $u_3 = u_3'$.  (Note that $u_3$ is either colored and we have such an oracle, 
or $u_3$ is the uncolored root and $x_3 = x_3'$.)

Observe that if $u_3$ is not the root, $x_3$ is, by definition of $\rep^{**}$, disconnected from the poles $\pi(u_3)$, and therefore also disconnected from {\em every} vertex in $V(G) \backslash V(G_{u_3})$.  The same is true of $x_3'$
and $V(G) \backslash V(G_{u_3'})$.  If $u_3 \neq u_3'$ then either $x_3' \not\in V(G_{u_3})$ or $x_3 \not\in V(G_{u_3'})$,
that is, $x_3$ and $x_3'$ must be disconnected.

Concerning the time bound, 
we spend $O(\pred(d,n))$ time on queries to $\rep^*$ and $O(\max\{Q(d,n),\pred(d,n)\})$ time on queries to $\rep$ and the connectivity query between $x_3$ and $x_3'$.
\end{proof}

\subsection{Reduction to Triconnected Graphs in the Fully Dynamic Model}\label{sect:reduction-to-tri-fully-dyn}

We prove Theorem~\ref{thm:reduction-to-tri-fully-dynamic} assuming, for notational simplicity, that the graph is biconnected and associated with an $SPQR$-tree $\mathcal{T}_3$.  It is easy to extend this algorithm to all graphs.

An obvious strategy is to maintain the structure from
Section~\ref{sect:bi-to-tri}, that is, to maintain the ability to evaluate $\rep^*$ on all vertices, $\rep$ on vertices in colored $\mathcal{T}_3$-nodes, and $\rep^{**}$ on the poles of colored nodes.
The first difficulty is that a vertex failure $z$ at some $\phi(z)$ deep in $\mathcal{T}_3$ can cause virtual edge failures in the model graphs of {\em every} ancestor of $\phi(z)$ and potentially change the colored/uncolored status of such nodes.  Thus, we cannot afford to explicitly maintain connectivity oracles at each colored node.\footnote{Even the isolated problem of deciding {\em whether} a node is colored is non-trivial.  In the absence of $R$-nodes this is equivalent to dynamic AND-OR tree evaluation.}
If we do not maintain explicit connectivity oracles then evaluating $\rep^{**}(z)$ seems impossible since it is a function of $O(1)$  $\rep$-queries at {\em each} colored ancestor of $\phi(z)$.

Our solution involves three main ideas.  The first is to partition $\mathcal{T}_3$ into heavy paths, a standard technique that ensures every node-to-root path intersects at most $\log |\mathcal{T}_3| \le \log n$ distinct paths.  This introduces a $\log n$-factor overhead and reduces the problem to one on paths, yet it does not help us stop changes deep in $\mathcal{T}_3$ from propagating upwards.
Two consecutive nodes $u$ and $p(u)$ on a heavy path share two vertices, namely $\pi(u)$.  Our second idea is to make $p(u)$ indifferent to {\em all} changes in $V(G_u) \backslash \pi(u)$ by having it maintain two candidate connectivity oracles, one where $\pi(u)$ is invalid (its endpoints are disconnected in $G_u^-\backslash F$) and one where $\pi(u)$ is valid.  This solves the propagation problem, but we must be able to determine which of $p(u)$'s candidate oracles reflects the {\em actual} failure set $F$.  The third idea is to reduce a $\rep^{**}$-query to $O(\log n)$ 1D product queries over the domain of $2\times 2$ boolean matrices.  Updates and queries to such a structure will take $O(\log n/\log\log n)$ time.

\paragraph{Heavy Path Decomposition}
Let each nonleaf $u\in \mathcal{T}_3$ choose a child $c(u)$ maximizing the number of descendants of $c(u)$.
These choices partition the nodes of $\mathcal{T}_3$ into a set $\mathcal{P}$ of {\em heavy paths} such that all leaf-to-root paths intersect at most $\log n$ such paths.
Let $P(u)$ be the heavy path containing $u$ and $\rt(u) = \rt(P(u))$ be the most ancestral node in $P(u)$.
Define $\tilde{\pi}(u) = \pi(c(u))$ to be the {\em subpolar edge} of $u$
Let $\tilde{G}_u^-$ be the subgraph of $G_u^-$ induced by $V(G_u) \backslash \phi^{-1}(\mathcal{T}_3(c(u)))$,
that is, $G_u^-$ excluding vertices in $V(G_{c(u)}) \backslash \pi(c(u))$.

\paragraph{Twin Connectivity Oracles}
Each non-leaf $u\in \mathcal{T}_3$ maintains two connectivity oracles for $G_u^-$ based on the current set $F$ of failed vertices.
Let $F(u) = F \cap \phi^{-1}(\mathcal{T}_3(u))$ be the failed vertices in $V(G_u)$, excluding the poles $\pi(u)$, if $u$ is not the root of $\mathcal{T}_3$.
Let $F^0(u) = F(u) \cup\phi^{-1}(\mathcal{T}_3(c(u)))$ 
and $F^1(u) = F(u) \backslash \phi^{-1}(\mathcal{T}_3(c(u)))$ reflect versions of reality where all or none of the vertices in $c(u)$'s subtree fail.
The oracle $\mathcal{A}_u^1$ answers connectivity queries between $V(M_u)$ vertices in the graph $G_u^- \backslash F^1(u)$
whereas $\mathcal{A}_u^0$ is defined with respect to $F^0(u)$.
If we restrict our attention to connectivity in $V(M_u)$, at least one of $\mathcal{A}_u^0$ and $\mathcal{A}_u^1$ behaves correctly,
with respect to the failure set $F(u)$.
In order to determine which one is correct let us dispense with the old blue/purple/uncolored coloring system\footnote{Note that it is problematic to assign $\mathcal{T}_3$-nodes these colors since $u$ may be colored purple according to $\mathcal{A}_u^0$ but not $\mathcal{A}_u^1$.} from Section~\ref{sect:bi-to-tri} and designate {\em all} $\mathcal{T}_3$-nodes {\em white}, {\em grey}, or {\em black}.  A node $u$ is white if the poles $\pi(u)$ are disconnected according to $\mathcal{A}_u^1$, black if they are connected according to $\mathcal{A}_u^0$, 
and gray if they are connected according to $\mathcal{A}_u^1$ but not $\mathcal{A}_u^0$.  In other words, in a gray $u$, all paths between $\pi(u)$ in $G_u^-\backslash F(u)$ go through the subpoles $\tilde{\pi}(u)$.  If $u$ is a leaf it has no subpoles and therefore cannot be gray, since $\mathcal{A}_u^0 = \mathcal{A}_u^1$.  Before continuing, let us note that before any vertex failures occur, all non-leaf $S$-nodes are gray (removing $\pi(u)$ and $\tilde{\pi}(u)$ disconnects the poles in $M_u$, which is a cycle) and all $P$- and $R$-nodes are black since there are at least three edge-disjoint paths between $\pi(u)$ in $M_u$.

In order to determine whether the poles $\pi(u)$ are actually connected in $G_u^-\backslash F(u)$ it suffices to maintain a predecessor structure
on all non-gray nodes in $P(u)$, which we view as oriented towards $\rt(P(u))$.  
Let $u'$ be the predecessor of $u$ on $P(u)$, that is, the first non-gray descendant of $u$ on $P(u)$, which may be $u$ itself. (There must be such a $u'$ since leaves cannot be gray.)
If $u'$ is black then the poles $\pi(u)$ are connected in $G_u^-\backslash F(u)$, and if white, they are not.   
It takes $O(\log\log n)$ time per query or to delete/insert any node into the predecessor structure, which corresponds to a color transition to/from gray.

\begin{Lem}
A failure or recovery can invalidate the color designation of at most $\log n$ $\mathcal{T}_3$-nodes.
\end{Lem}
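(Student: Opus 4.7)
My plan is to show that toggling one vertex $z \in V(G)$ between failed and recovered alters the color of at most $\log n$ nodes of $\mathcal{T}_3$. Since a node $u$'s color is determined entirely by whether the oracles $\mathcal{A}_u^0$ and $\mathcal{A}_u^1$ report the poles $\pi(u)$ as connected, it suffices to bound the number of $u$ at which at least one of $F^0(u)$, $F^1(u)$ changes.

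The first step is to expose how $F^0(u)$ and $F^1(u)$ depend on $F$. Unrolling the definitions in the text (using $F(u) = F \cap \phi^{-1}(\mathcal{T}_3(u))$) yields
\[
F^0(u) \;=\; \phi^{-1}(\mathcal{T}_3(c(u))) \,\cup\, \bigl(F \cap \phi^{-1}(\mathcal{T}_3(u) \setminus \mathcal{T}_3(c(u)))\bigr),
\qquad
F^1(u) \;=\; F \cap \phi^{-1}(\mathcal{T}_3(u) \setminus \mathcal{T}_3(c(u))).
\]
Both sets depend on $F$ only through the single quantity $F \cap \phi^{-1}(\mathcal{T}_3(u) \setminus \mathcal{T}_3(c(u)))$, so toggling $z$ perturbs $F^0(u)$ and $F^1(u)$ if and only if $\phi(z) \in \mathcal{T}_3(u) \setminus \mathcal{T}_3(c(u))$.

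Setting $u' = \phi(z)$, this condition says that $u$ is an ancestor of $u'$ (inclusive) while the child of $u$ on the $u$-to-$u'$ path in $\mathcal{T}_3$ is \emph{not} the heavy child $c(u)$. In particular $u = u'$ always qualifies, and for a strict ancestor $u$ the condition holds iff the edge from $u$ down toward $u'$ is a light edge of the heavy-path decomposition. Walking from $u'$ up to the root, the affected nodes are therefore precisely the lowest node of each heavy path intersected by this walk: once we enter a heavy path $P$ from below, every subsequent parent step within $P$ uses $c(u)$ as the intermediate child and leaves both $F^0$ and $F^1$ unchanged.

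The final step is routine counting: the $u'$-to-root path intersects at most $\log n$ distinct heavy paths by the standard property of the decomposition, giving the claimed bound. I do not foresee a genuine obstacle; the only point worth checking is the boundary case $u = u'$, which correctly accounts for the heavy path containing $u'$ because $c(u')$ is a proper descendant of $u'$ and thus $u' \notin \mathcal{T}_3(c(u'))$.
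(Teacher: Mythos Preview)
Your proof is correct and follows essentially the same approach as the paper: both arguments show that the only nodes whose color can change are the lowest node on each heavy path intersected by the $\phi(z)$-to-root path, giving the $\log n$ bound. Your presentation is somewhat more explicit, working directly from the set-theoretic definitions of $F^0(u)$ and $F^1(u)$ to isolate the condition $\phi(z)\in\mathcal{T}_3(u)\setminus\mathcal{T}_3(c(u))$, whereas the paper phrases the same observation operationally in terms of which virtual edge $\pi(\rt(P))$ may have its validity altered in the parent's model graph; the underlying reasoning is the same.
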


\begin{proof}
Consider the failure or recovery of a vertex $z$ with $\phi(z) = u$ and $P=P(u)$.
This change might cause the color of $u$ to change, but, by definition, cannot invalidate the color
of any other $u'\in P$.  Now consider $v = p(\rt(P))$, if $\rt(P)$ is not already the root of $\mathcal{T}_3$.
The twin connectivity oracles for $v$ entertain the possibility that $\tilde{\pi}(v) \in E(M_v)$ is invalid or valid,
but both must represent the {\em actual} validity of the edge $\pi(\rt(P)) \in E(M_v)$, which may have changed 
due to the failure/recovery of $z$.  Thus, changing the status of $z$ may affect $v$ but no other nodes on $P(v)$ and, in general,
affects up to one node on each heavy path traversed from $\phi(z)$ to the root of $\mathcal{T}_3$.
\end{proof}

\paragraph{Dynamic Matrix Products}

Suppose $P = (u_1,\ldots,u_{|P|})$ is a heavy path.  If we are only concerned with connectivity between poles of nodes in $P$,
each node $u_i$ is effectively in one of 16 states depending on whether each pole in $\tilde{\pi}(u_i)$ is connected, in the graph $\tilde{G}_{u_i}^- \backslash F(u_i)$, to each pole in $\pi(u_i)$.\footnote{Since connectivity is transitive, not all 16 states are possible.}  Let $\sigma_i$ be a $2\times 2$ boolean matrix representing this connectivity.  (In other words, a node's poles are ordered in some consistent fashion to map to rows and columns.)  It follows that the boolean product $\sigma_i\cdots\sigma_j$ represents the connectivity between $\tilde{\pi}(u_i)$ and $\pi(u_j)$ in the graph $(\tilde{G}_{u_i}^- \cup \cdots \cup \tilde{G}_{u_j}^-) \backslash F(u_j)$.  
Our algorithm for computing $\rep^{**}(z)$ (and answering connectivity queries) 
requires a dynamic data structure for answering various product queries.

\begin{enumerate}
\item \texttt{init}$()$: Set $\sigma_i \leftarrow \scalebox{.5}{$\left(\begin{array}{cc}1&1\\1&1\end{array}\right)$}$, for $1< i \le |P|$.
\item \texttt{update}$(i,\sigma)$: $\sigma_i \leftarrow \sigma$, where $\sigma$ is a $2\times 2$ boolean matrix.
\item \texttt{product}$(i,j)$: Return $\sigma_i\cdots \sigma_j$.
\item \texttt{search}$(x,x',i)$: Return $\max\{j\ge i-1 \::\: x(\sigma_i\wedge x')\cdots (\sigma_j \wedge x') \neq 0\}$,
given boolean row vector $x$ and $2\times 2$ mask $x'$.
\end{enumerate}

\begin{Lem}
After $o(n)$ preprocessing, all operations (\texttt{update}, \texttt{product}, \texttt{search}) can be executed
in $O(\log |P|/\log\log n) = O(\log n/\log\log n)$ time.
\end{Lem}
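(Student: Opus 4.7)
The plan is to build a balanced search tree over the positions $1,\ldots,|P|$ with branching factor $B=\lceil\log^{1/2} n\rceil$, so that the depth is $O(\log|P|/\log B) = O(\log|P|/\log\log n)$. Each internal node stores the boolean-matrix product of the leaf matrices $\sigma_i$ in its subtree. Every operation will be answered by traversing $O(\log|P|/\log\log n)$ tree nodes and doing $O(1)$ work per node via precomputed lookup tables.

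Each $\sigma_i$ is a $2\times 2$ boolean matrix (4 bits), so a node with $B$ children encodes $4B=O(\sqrt{\log n})$ bits of matrix data, comfortably within a machine word. In the $o(n)$ preprocessing phase I would build three lookup tables indexed by an encoding of $B$ consecutive matrices packed into one word (plus $O(\log B)$ auxiliary bits). The first returns the product of the $B$ matrices; the second, indexed additionally by a sub-range $[a,b]\subseteq[1,B]$, returns $\sigma_a\sigma_{a+1}\cdots\sigma_b$; the third, indexed additionally by a starting row vector $y\in\{0,1\}^2$, a mask $x'\in\{0,1\}^{2\times 2}$, and a starting index $a\in[1,B]$, returns the largest $k\ge a-1$ with $y(\sigma_a\wedge x')\cdots(\sigma_k\wedge x')\ne 0$, together with the resulting row vector. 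Each table has $2^{4B+O(\log B)}=n^{o(1)}$ entries and can be filled by dynamic programming in $n^{o(1)}=o(n)$ time. Because \texttt{init} leaves every $\sigma_i$ equal to the all-ones matrix, every internal product is a fixed constant, so the tree can be represented implicitly, with nodes materialised only when \texttt{update} first touches their subtrees.

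For \texttt{update}$(i,\sigma)$ I change the leaf value and walk to the root, using the first table to recompute each ancestor's stored product in $O(1)$ time. For \texttt{product}$(i,j)$ I perform the standard range aggregation: decompose $[i,j]$ into $O(\log|P|/\log\log n)$ canonical subtrees and combine their products left-to-right, using the second table wherever only a contiguous subset of children participates. For \texttt{search}$(x,x',i)$ I walk from leaf $i$ towards the root while maintaining the current accumulated row vector $y$: at each ancestor I feed the packed matrices of right-siblings on the current path into the third table to see whether all of them can be absorbed while $y$ stays nonzero; if yes, I continue ascending with the updated $y$; if not, I descend into the first right-sibling that kills $y$ and recurse, reusing the third table at each level to pinpoint the exact child.

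The main technical obstacle is arranging that a single table lookup handles both the ascent and descent phases of \texttt{search} in $O(1)$ time per level, which is precisely what the third table is designed for. Everything else --- the balanced-tree depth, the canonical range decomposition, the lazy materialisation that keeps preprocessing $o(n)$, and the observation that pole-to-pole reachability composes correctly under boolean matrix multiplication --- is routine.
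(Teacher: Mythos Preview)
Your approach is essentially the paper's: a wide tree (branching factor a power of $\log n$) with word-packed nodes and precomputed lookup tables; the paper takes $B=\Theta(\log n)$ rather than your $\sqrt{\log n}$, but either choice yields depth $O(\log|P|/\log\log n)$ and $o(n)$-sized tables, and your explicit lazy materialisation for \texttt{init} is a nice touch the paper's sketch omits.

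There is, however, one subtle step that would fail as written. Your internal nodes store only the \emph{unmasked} product of their leaves, and at levels above the bottom your third table is fed the children's stored subtree products. But masking does not commute with the product: from $\prod_{l}\sigma_l$ one cannot recover $\prod_{l}(\sigma_l\wedge x')$, which is what \texttt{search} actually needs. So at any node of height $\ge 2$ your table lookup computes the wrong quantity. The repair is cheap: since there are only $16$ possible masks $x'$ (and in fact the algorithm only ever uses two), have each internal node store, for every mask, the product of the masked leaves in its subtree. This multiplies the bits per node by a constant, so the packed children still fit in $O(B)$ bits, all tables stay $n^{o(1)}$ in size, and \texttt{update} recomputes all masked products at each ancestor with one lookup. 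With that adjustment your argument goes through.
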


\begin{proof} (sketch)
Each $\sigma_i$ requires 4 bits to represent, so $\Theta(\log n)$ matrices can be packed into one machine word.  
We maintain a $\Theta(\log n)$-way tree over the array, where $\{\sigma_i\}_{1<i\le |P|}$ are at the leaves and internal nodes store the product of their descendant leaves.  All operations either need to update or retrieve $O(1)$ words of information at each of $O(\log |P|/\log\log n)$ levels, which can be done in $O(\log |P|/\log\log n)$ time by tabulating various functions on $\Theta(\log n)$ bits in advance, in $o(n)$ time.
\end{proof}

Define $\rep^{**}_P(z)$ for $\phi(z) \in P$, to be the pole most ancestral to $\phi(z)$ connected to $z$ in $G_{\rt(P)}^- \backslash F$, or $z$ if there is no such pole.
If $\rep^{**}_P(z)$ is not a pole of $\rt(P)$ then clearly $\rep^{**}_P(z) = \rep^{**}(z)$.  In general a $\rep^{**}(z)$ query is easily reduced to $\log n$ $\rep^{**}_P$ queries, for the heavy paths $P$ intersecting the path from $\phi(z)$ to the root of $\mathcal{T}_3$, so we shall focus our attention solely on evaluating $\rep^{**}_P(z)$, where $\phi(z) = u_i \in P$.
We compute $\rep^{**}_P(z)$ as follows.

\begin{enumerate}
\item Our first task is to determine the boolean vector $x$ representing the connectivity between $z$ and the poles $\pi(u_i)$ in $\tilde{G}_{u_i}^- \backslash F$.  Using one predecessor query on the non-gray nodes in $P$ we determine which $\mathcal{A}_{u_i}^\alpha$ reflects reality, where $\alpha\in\{0,1\}$, then determine $x$ with two connectivity queries to $\mathcal{A}_{u_i}^\alpha$.  If $x = 0$ then let $\rep^{**}_P(z) = z$ and halt.

\item Otherwise let $j = $ \texttt{search}$(x,\scalebox{.5}{$\left(\begin{array}{cc}1&1\\1&1\end{array}\right)$},i+1)$.
Let $\rho \subseteq \pi(u_j)$ be the poles reachable from $z$ in $G_{u_j}^- \backslash F(u_j)$.
It follows that $z$ is disconnected from both poles of $u_{j+1}$ in
$G_{u_{j+1}}^- \backslash F(u_{j+1})$, if $u_j \neq \rt(P)$.  If there is some pole in $\rho \backslash F$ let $\rep^{**}_P(z)$ be any such pole and halt.  (Recall that $F(u_j)$ is the set of failures in $V(G_{u_j})$ {\em excluding} its poles.)

\item If the procedure has not halted then $\rho\subseteq F$.
For any index $1\le l\le |P|$ let $\pi(u_l) = (\pi_0^l, \pi_1^l)$.
Without loss of generality suppose $\pi_0^j\in \rho$, and that $j'\in (i,j]$ is the minimum index
such that $\pi_0^{j'} = \cdots = \pi_0^j$ and, if $\pi_1^{j}\in \rho$, $\pi_1^{j'} \neq \pi_1^j$.
(Recall that a pole can appear in an unbounded number of nodes along $P$.)
At least one of the poles in $\tilde{\pi}(u_{j'})$ is connected to $z$.
If $\pi_1^{j'}$ is disconnected from $z$ then $\rep^{**}_P(z)$ can be either pole of $\tilde{\pi}(u_{j'})$ connected to $z$.
Otherwise $\rep^{**}_P(z)$ is the most ancestral pole among $\pi_1^{j'},\ldots,\pi_1^{j-1}$ connected to $z$,
which can be found as follows.  
Let 
$j'' = $ \texttt{search}$(\scalebox{.5}{$\left(\begin{array}{cc}0\\1\end{array}\right)$},
\scalebox{.5}{$\left(\begin{array}{cc}0 & 0\\0 & 1\end{array}\right)$},j'+1)$, where the 
first and second arguments encode that we are only interested in connectivity between $\pi_1^{j'},\ldots,\pi_1^{j-1}$.
If $\pi_1^{j''} \not\in F$ then $\rep^{**}_P(z) = \pi_1^{j''}$, otherwise $\rep^{**}_P(z) = \pi_1^{j'''}$ where $j''' < j''$ is the maximum index such that $\pi_1^{j'''} \neq \pi_1^{j''}$.
\end{enumerate}

The time bounds claimed in Theorem~\ref{thm:reduction-to-tri-fully-dynamic} follow easily.  It takes linear time to build the $SPQR$-tree, decompose it into heavy paths, build the white/black predecessor structure and boolean product structure on each heavy path, and augment it with various pointers, e.g., pointers from each node to the root of its heavy path.  A vertex update on $z$ induces vertex updates in the oracles $\mathcal{A}_{\phi(z)}^0$ and $\mathcal{A}_{\phi(z)}^1$, updates to the color and $2\times 2$ matrix of $\phi(z)$, which take $O(\log n/\log\log n)$ time, 
then $O(\log n)$ edge updates in ancestors of $\phi(z)$ together with $O(\log n)$ color and matrix updates.
A query on $x,x'$ amounts to computing a connectivity query between $\rep^{**}(x)$ and $\rep^{**}(x')$.
A $\rep^{**}$ evaluation reduces to $\log n$ $\rep^{**}_P$ evaluations, which involve a constant number 
of queries and operations on the matrix product data structure, for a total time of $O(\log n(Q_3 + \log n/\log\log n))$.

\bibliographystyle{plain}


\end{document}